\numberwithin{equation}{section}
\newcommand{\be}{\begin{eqnarray}}
\newcommand{\ee}{\end{eqnarray}}
\newcommand{\non}{\nonumber}
\newcommand{\id}{\mathbb{I}}
\newcommand{\tr}{\mathop{\rm tr}\nolimits}
\newcommand{\eq}{\textrm{Eq.}}
\newtheorem{prop}{Proposition}
\newcommand{\refs}{|0\rangle}
\newcommand{\drefs}{\langle0|}
\newcommand{\mA}{\mathcal{A}}
\newcommand{\mB}{\mathcal{B}}
\newcommand{\mC}{\mathcal{C}}
\newcommand{\mD}{\mathcal{D}}
\newcommand{\mE}{\mathcal{E}}
\newcommand{\mP}{\mathcal{P}}
\newcommand\blfootnote[1]{%
  \begingroup
  \renewcommand\thefootnote{}\footnote{#1}%
  \addtocounter{footnote}{-1}%
  \endgroup
}
\begin{document}

\begin{titlepage}
\strut\hfill UMTG--288
\vspace{.5in}
\begin{center}

\LARGE 
Algebraic Bethe ansatz\\[0.2in] 
for the Temperley-Lieb spin-1 chain\\
\vspace{1in}
\large 
Rafael I. Nepomechie \footnote{
Physics Department,
P.O. Box 248046, University of Miami, Coral Gables, FL 33124 USA}
and Rodrigo A. Pimenta ${}^{1,}$\footnote{
Departamento de F\'{i}sica, Universidade Federal de S\~{a}o Carlos, Caixa Postal 676, 
CEP 13565-905, S\~{a}o Carlos, Brasil}\\[0.8in]
\end{center}

\vspace{.5in}

\begin{abstract}
We use the algebraic Bethe ansatz to obtain the eigenvalues and
eigenvectors of the spin-1 Temperley-Lieb open quantum chain with
``free'' boundary conditions.  We exploit the associated reflection
algebra in order to prove the off-shell equation satisfied by the
Bethe vectors.
\end{abstract}

\blfootnote{e-mail addresses: {\tt nepomechie@physics.miami.edu, pimenta@df.ufscar.br}}

\end{titlepage}

\setcounter{footnote}{0}

\section{Introduction}

The algebraic Bethe ansatz is a powerful method for solving 
both closed \cite{Faddeev:1996iy} and open \cite{Sklyanin:1988yz} integrable quantum spin 
chains.
The primary objective is to solve the spectral problem associated with
the transfer matrix.  The ultimate goals are to also compute scalar
products and correlation functions \cite{KBI}.

There are models, however, that have resisted solution by means 
of the algebraic Bethe ansatz, despite being integrable. One such 
example is the open quantum spin-$s$ chain with
``free'' boundary conditions constructed from the Temperley-Lieb (TL) 
algebra $TL_{N}$ \cite{Temperley:1971iq}, for spin $s>\frac{1}{2}$.
This is a unital algebra over the complex numbers $\mathbb{C}$ with $N-1$ generators $\{X_{(1)}, \ldots, X_{(N-1)} \}$ satisfying
\be
X^{2}_{(i)} &=& c X_{(i)} \,, \non \\
X_{(i)} X_{(i\pm 1)} X_{(i)} &=& X_{(i)}\,,  \non \\
X_{(i)} X_{(j)} &=& X_{(i)} X_{(j)}\,, \qquad |i-j| > 1 \,,
\label{TLalgebra}
\ee
where $c=-(q+q^{-1})$ and $q$ is an arbitrary parameter. The 
associated spin chain Hamiltonian is given by
\be\label{bqH}
H=\sum_{i=1}^{N-1} X_{(i)}\,.
\ee
The operator $X_{(i)}$, which acts on $\left(\mathbb{C}^{(2s+1)}\right)^{\otimes N}$, is defined by
\be
X_{(i)} = X_{i, i+1}= \id^{\otimes(i-1)} \otimes X \otimes \id^{\otimes(N-i-1)} \,, 
\ee
where $X$ is a $(2s+1)^{2}$ by $(2s+1)^{2}$ matrix with the 
following matrix elements \cite{Batchelor:1989uk}
\be
\langle m_{1}, m_{2}|X|m'_{1}, m'_{2}\rangle & = &
(-1)^{m_{1}-m'_{1}} 
Q^{m_{1}+m'_{1}}\delta_{m_{1}+m_{2},0}\delta_{m'_{1}+m'_{2},0}\,, 
\label{Xmatelems}
\ee
where $m_{1}, m_{2}, m'_{1}, m'_{2} = -s, -s+1, \ldots, s$;
and $\id$ is the identity operator on ${\mathbb C}^{2s+1}$.
The parameter $Q$ is related to $q$ by
\be
c = -(q+q^{-1}) = \left[2s+1\right]_{Q} = \frac{Q^{2s+1}-Q^{-2s-1}}{Q-Q^{-1}} =
 \sum_{k=-s}^{s} Q^{2k} \,.
\ee

The integrability of the Hamiltonian (\ref{bqH}), as well as the
possibility of solving it by algebraic Bethe ansatz, is based on the
fact that the TL algebra gives rise to solutions of the Yang-Baxter
equation by means of a procedure known as Baxterization
\cite{Jones:1990hq}.  However, the R-matrix associated with this
Hamiltonian (see Eq. (\ref{Rmatrix}) below) leads to very unusual
exchange relations for the generators of the Yang-Baxter and
reflection algebras.  This seems to be the main difficulty that has
obstructed the use of the algebraic Bethe ansatz for R-matrices from
the TL algebra.

In a previous paper \cite{Nepomechie:2016ejv}, we have proposed a number
of results related to the spectrum of (\ref{bqH}).  In particular, we
have conjectured the Bethe states and the off-shell equations that 
they satisfy.  Interestingly, such off-shell equations have a
universal character, in the sense that they are independent of the
value of the spin $s$.  The aim of this note is to present a proof of this
conjecture for $s=1$, in which case $X$ (\ref{Xmatelems}) is the 
following $9 \times 9$ matrix
\be\label{TLoperator}
X=\left(
\begin{array}{ccccccccc}
 0 & 0 & 0 & 0 & 0 & 0 & 0 & 0 & 0 \\
 0 & 0 & 0 & 0 & 0 & 0 & 0 & 0 & 0 \\
 0 & 0 & Q^{-2} & 0 & -Q^{-1} & 0 & 1 & 0 & 0 \\
 0 & 0 & 0 & 0 & 0 & 0 & 0 & 0 & 0 \\
 0 & 0 & -Q^{-1} & 0 & 1 & 0 & -Q & 0 & 0 \\
 0 & 0 & 0 & 0 & 0 & 0 & 0 & 0 & 0 \\
 0 & 0 & 1 & 0 & -Q & 0 & Q^2 & 0 & 0 \\
 0 & 0 & 0 & 0 & 0 & 0 & 0 & 0 & 0 \\
 0 & 0 & 0 & 0 & 0 & 0 & 0 & 0 & 0
\end{array}
\right)\,,
\ee
and
\be
1+Q^2+Q^{-2}=-\left(q+q^{-1}\right).
\ee
For $Q=1$, this is just the pure biquadratic spin-1 chain
\be
X\Big\vert_{Q=1}=\left( \vec S \stackrel{\cdot}{\otimes} \vec S \right)^2- \id\otimes\id \,,
\ee
while the general-$Q$ case corresponds to the $U_{Q}sl(2)$-deformation \cite{Batchelor:1989uk}.
Let us mention that the model (\ref{bqH}), as well as its closed version
with periodic boundary conditions, has been previously studied by many
authors using alternative approaches to the algebraic Bethe ansatz, see
\textit{e.g.} \cite{Parkinson1, Parkinson2, Barber:1989zz, Kluemper1, 
Kluemper2, Alcaraz:1992uq, Koberle:1993in, Kulish, Aufgebauer:2010gg}
and references therein.

We briefly review the construction of the transfer matrix 
corresponding to the Hamiltonian (\ref{bqH}) in section 
\ref{sec:transfer}. We then recall in section \ref{sec:ABA} the basic objects of the quantum inverse scattering
method and present the main results of this paper, given in 
Propositions \ref{prop1} - \ref{prop4}, which follow from a careful 
analysis of the reflection algebra. We also briefly consider
the scalar product between an on-shell Bethe vector and its off-shell dual.
We discuss our results and some further directions of investigation in section \ref{sec:discussion}. Some
functions introduced in the main text are collected in appendix \ref{sec:functions}. We present
some details of the proof of Proposition \ref{prop1} in appendix \ref{sec:proofAonB}.

\section{Transfer matrix}\label{sec:transfer}

Integrable quantum spin chains are characterized by a set of 
commuting conserved quantities (among them the Hamiltonian), whose 
generating function is the transfer matrix.
We briefly review here the construction of the transfer matrix 
corresponding to the Hamiltonian (\ref{bqH}), (\ref{TLoperator}).
The main ingredient is the R-matrix, which acts on the vector space 
$\mathbb{C}^3\otimes\mathbb{C}^3$, and is 
given (using the notation of \cite{Kulish}) by \cite{Jones:1990hq}
\be\label{Rmatrix}
R(u)=\omega(qu)\mP+\omega(u)\mP X\,,\qquad \omega(u)=u-u^{-1}\,,
\ee
where $X$ is given by (\ref{TLoperator}), and 
\be
{\cal P} = \sum_{a,b=1}^{3}e_{ab} \otimes e_{ba}\,,
\qquad \left( e_{ab} \right)_{ij} =
\delta_{a,i} \delta_{b,j} \,,
\ee
is the permutation matrix. As a consequence of the TL algebra (\ref{TLalgebra}), 
the R-matrix satisfies the Yang-Baxter equation
\be\label{YBE}
R_{12}(u/v)\, R_{13}(u)\, R_{23}(v)=R_{23}(v)\, R_{13}(u)\, R_{12}(u/v).
\ee
It also has the unitarity property
\be
R_{12}(u) R_{21}(u^{-1}) = \zeta(u)\, \id^{\otimes 2} \,,
\qquad \zeta(u) = \omega(u q^{-1})\, \omega(u^{-1} q^{-1})\,,
\label{unitarity}
\ee
where $R_{21} = {\cal P}_{12}\, R_{12}\, {\cal P}_{12} = 
R_{12}^{t_{1} t_{2}}$.
The R-matrix can be used to construct the single-row monodromy matrices
\be\label{single}
T_{0}(u)=R_{0N}(u)\dots R_{01}(u)\,,\qquad \hat{T}_0(u)=R_{10}(u)\dots R_{N0}(u)\,,
\ee
where $0$ denotes an auxiliary vector space.  It follows from the
Yang-Baxter equation that $T$ obeys the fundamental relation
\be\label{YBalgebra}
R_{12}(u/v)\, T_1(u)\, T_2(v)=T_2(v)\, T_1(u)\, R(u/v)\,,
\ee
and $\hat T$ obeys a similar relation. The double-row monodromy 
matrix \cite{Sklyanin:1988yz} is given by
\be\label{double}
U_{0}(u)=T_{0}(u)\, \hat{T}_{0}(u)\,,
\ee
and it obeys the reflection equation
\be\label{REalgebra}
R_{12}(u/v)\, U_{1}(u)\, R_{21}(uv)\, U_{2}(v)=U_{2}(v)\, 
R_{12}(uv)\, U_{1}(u)\, R_{21}(u/v)\,.
\ee
The double-row transfer matrix is given by  \cite{Sklyanin:1988yz, 
Nepomechie:2016ejv}
\be\label{t}
t(u)=\tr_{0}\left[M_{0}\, U_{0}(u)\right]\,,\qquad 
M=\textrm{diag}\left(Q^{-2},1,Q^2\right)\,.
\ee
Indeed, it has the fundamental commutativity property
\be
\left[t(u)\,, t(v)\right]=0\,,
\label{commutativity}
\ee
and it contains the Hamiltonian (\ref{bqH})
\be
H = \alpha \frac{d}{du}t(u)\Big\vert_{u=1} + \beta\, \id^{\otimes N} \,,
\label{Htransfrltn}
\ee
where
\be
\alpha = -\left[4 \omega(q^{2})\, \omega(q)^{2N-2} \right]^{-1}\,, \qquad 
\beta = \frac{\omega(q)}{\omega(q^{2})}-\frac{N}{2}\frac{\omega(q^{2})}{\omega(q)}\,.
\label{alphabeta}
\ee
The relations (\ref{commutativity}) and 
(\ref{Htransfrltn}) imply that the model (\ref{bqH}) is integrable.

\section{Algebraic Bethe ansatz}\label{sec:ABA}

We now use the algebraic Bethe ansatz to solve the spectral problem associated with
the transfer matrix (\ref{t}). Let us recall the basic needed steps: 
\begin{enumerate}
\item Identify suitable operators on the quantum space from 
auxiliary-space matrix elements of the double-row monodromy matrix (\ref{double}).
\item Identify a reference state with respect to the creation and annihilation operators.
\item Formulate convenient exchange relations from the reflection algebra (\ref{REalgebra}).
\item Define a Bethe vector as a product of creation operators acting 
on the reference state; and use the exchange relations to determine the action of the transfer matrix
on an off-shell Bethe vector.
\end{enumerate}

Let us denote the auxiliary-space matrix elements of the single-row monodromy 
matrices $T_0(u)$ and $\hat{T}_0(u)$ by 
\be
&&T_0(u)=\left(
\begin{array}{lll}
T_{11}(u) & T_{12}(u) & T_{13}(u) \\
T_{21}(u) & T_{22}(u) & T_{23}(u) \\
T_{31}(u) & T_{32}(u) & T_{33}(u)
\end{array}
\right)\,,\non\\
&&\hat{T}_0(u)=\left(
\begin{array}{lll}
\hat{T}_{11}(u) & \hat{T}_{12}(u) & \hat{T}_{13}(u) \\
\hat{T}_{21}(u) & \hat{T}_{22}(u) & \hat{T}_{23}(u) \\
\hat{T}_{31}(u) & \hat{T}_{32}(u) & \hat{T}_{33}(u)
\end{array}
\right),
\label{singlemono}
\ee
where each entry acts on the quantum space $\left({\mathbb C}^{3}\right)^{\otimes N}$.
It is convenient to denote the auxiliary-space matrix elements of the double-row monodromy matrix (\ref{double})
by
\be\label{mono}
U_0(u)=\left(
\begin{array}{ccc}
\mA(u) & \mB_1(u) & \mB(u) \\
\mC_1(u) & \mE(u)+\mA(u) & \mB_2(u) \\
\mC(u) & \mC_2(u) & \mD(u)+y(u)\mA(u)
\end{array}
\right)\,,
\ee
where
\be\label{ddef}
y(u)=1-Q^{-2}d(u)\,,\qquad d(u)=-\frac{\omega(u^2)}{\omega(qu^2)}\,.
\ee
Three-dimensional representations such as (\ref{singlemono})
have been used to solve periodic 19-vertex \cite{Tarasov} and nested 15-vertex 
\cite{Kulish:1983rd} models; and (\ref{mono}) is a generalization
for the open case, see for instance \cite{Fan1997409} and \cite{FOERSTER1993512}.

The operator entries of the double-row monodromy matrix 
(\ref{mono}) are given in terms of single-row monodromy 
matrix elements $T_{ij}$ and $\hat{T}_{ij}$ by means of (\ref{double}). In terms
of the double-row operators, the transfer matrix (\ref{t}) can be written as
\be\label{transfer}
t(u)=a(u)\,\mA(u)+Q^2\,\mD(u)+\mE(u) \,,
\ee
where
\be\label{adef}
a(u)=-\frac{\omega(q^2u^2)}{\omega(qu^2)}\,.
\ee
\subsection{Bethe vector}
Having defined the operator representation, we need to find a convenient reference state. We note that
\be\label{refstate}
\refs=\left( 
\begin{array}{c}
1 \\ 
0 \\
0
\end{array}
\right)^{\otimes N}
\ee
satisfies the following properties
\be\label{singlerep}
&&T_{ij}(u)\refs=\hat T_{ij}(u)\refs=0\,,\qquad \textrm{for}\qquad i>j\,,\non\\
&&T_{11}(u)\refs=\hat T_{11}(u)\refs=\omega(qu)^N\refs\,,\non\\
&&T_{22}(u)\refs= \hat T_{22}(u)\refs=0\,,\non\\
&&T_{33}(u)\refs=\hat T_{33}(u)\refs=\omega(u)^N\refs\,.
\ee
Moreover, the fundamental relation (\ref{YBalgebra}) evaluated at the point $u=v^{-1}$, taking into
account (\ref{unitarity}) and (\ref{singlerep}), gives
\be\label{tijtji}
&&T_{21}(u)\, \hat T_{12}(u)\refs=T_{11}(u)\, \hat T_{11}(u)\refs\,,\non\\
&&T_{32}(u)\, \hat T_{23}(u)\refs=Q^{-2}T_{33}(u)\, \hat T_{33}(u)\refs\,,\non\\
&&T_{31}(u)\, \hat T_{13}(u)\refs=y(u)\, T_{11}(u)\, \hat 
T_{11}(u)\refs-\left(Q^{-2}+y(u)\right)T_{33}(u)\, \hat T_{33}(u)\refs\,\,,\non\\
&&T_{11}(u)\, \hat T_{12}(u)\refs=0\,.
\ee
The results (\ref{singlerep}) and (\ref{tijtji}) imply
\be\label{doublerep}
&&\mA(u)\refs=\Lambda_1(u)\refs\,,\non\\
&&\mD(u)\refs=Q^{-2}\,d(u)\,\Lambda_2(u)\refs\,,\non\\
&&\mE(u)\refs=0\,,
\ee
where
\be\label{Lambda12}
\Lambda_1(u)=\omega(qu)^{2N}\,,\qquad \Lambda_2(u)=\omega(u)^{2N}\,.
\ee
In addition, we also have the following properties for the off-diagonal
double-row operators,
\be\label{caction}
&&\mC(u)\refs=\mC_1(u)\refs=\mC_2(u)\refs=0\,,
\ee
and
\be\label{b1action}
&&\mB_1(u)\refs=0\,.
\ee
Due to (\ref{doublerep}), we see that the reference 
state $\refs$ is an eigenstate of the transfer matrix,
\be
&&t(u)\refs=\left(a(u)\Lambda_1(u)+d(u)\Lambda_2(u)\right)\refs.
\ee

Since the operator $\mB_1(u)$ (in addition to $\mC(u)$, $\mC_1(u)$ and $\mC_2(u)$) annihilates the reference state, we are left in principle with two operators to play the role of raising operators, either $\mB(u)$ or $\mB_2(u)$.
However, the reflection algebra (\ref{REalgebra}) strongly suggests that  $\mB(u)$ is the correct
choice. Indeed, after some manipulation\footnote{Let us call $\textrm{eq}[i,j]$ the $(i,j)$ entry of  
equation (\ref{REalgebra}) regarded 
as a $9\times 9$ matrix in the auxiliary space. 
The exchange relations (\ref{AB}) - (\ref{BE}) all follow from
$\textrm{eq}[1,3]$, $\textrm{eq}[1,7]$, $\textrm{eq}[1,8]$,
$\textrm{eq}[1,9]$, $\textrm{eq}[4,6]$ and $\textrm{eq}[7,9]$.}, we found the following
exchange relations from the reflection algebra,
\be\label{AB}
&&\mA(u)\mB(v)=f(u,v)\mB(v)\mA(u)+f_1(u,v)\mB(u)\mA(v)+f_2(u,v)\mB(u)\mD(v)\non\\&&\qquad\qquad\qquad+
f_3(u,v)\mB(u)\mE(v)-\mB_1(u)\mB_2(v)\,,
\ee
\be\label{DB}
&&\mD(u)\mB(v)=h(u,v)\mB(v)\mD(u)+h_1(u,v)\mB(u)\mD(v)+h_2(u,v)\mB(u)\mA(v)\non\\&&\qquad\qquad\qquad+
h_3(u,v)\mB(u)\mE(v)+Q^{-2}a(u)\mB_1(u)\mB_2(v)-Q^{-2}\mE(u)\mB(v)\,,
\ee
\be\label{BB}
&&\mB(u)\mB(v)=\mB(v)\mB(u)\,,
\ee
\be\label{BB1}
&&\mB(u)\mB_1(v)=\mB(v)\mB_1(u)\,,
\ee
\be\label{BE}
&&\mB(u)\mE(v)=\mB(v)\mE(u)\,,
\ee
where the coefficients are given by
\be\label{coeffj}
&&f(u,v)=\frac{\omega(uq^{-1}v^{-1})\omega(uv)}{\omega(uv^{-1})\omega(quv)}\,,\non\\
&&f_1(u,v)=\frac{\omega(v^2)}{\omega(qv^2)\omega(uv^{-1})}\left(\omega(qvu^{-1})+Q^{-2}\omega(vu^{-1})\right)\,,\non\\
&&f_2(u,v)=-1-\frac{Q^2\omega(uv)}{\omega(quv)}\,,\non\\
&&f_3(u,v)=-\frac{\omega(uv)}{\omega(quv)}\,,\qquad
\ee
and
\be\label{coefhj}
&&h(u,v)=\frac{\omega(uqv^{-1})\omega(q^2uv)}{\omega(uv^{-1})\omega(quv)}\,,\non\\
&&h_1(u,v)=\frac{\omega(q^2u^2)}{Q^2\omega(qu^2)\omega(uv^{-1})}\left(\left(1+Q^{-2}\right)\omega(quv^{-1})+\omega(q^2uv^{-1})\right)\,,\non\\
&&h_2(u,v)=\frac{\omega(q^2u^2)\omega(v^2)}{Q^4\omega(qu^2)\omega(quv)\omega(qv^2)}\left(\left(1+Q^{-2}\right)\omega(q^2uv)+\omega(q^3uv)\right)\,,\non\\
&&h_3(u,v)=\frac{\left(q-q^{-1}\right)\omega(quv^{-1})}{Q^2\omega(qu^2)\omega(quv)}\,.
\ee
We can observe that the commutation relations (\ref{AB}) and (\ref{DB})
have a structure similar to those for the six-vertex model \cite{Sklyanin:1988yz},
although with some extra terms. The relation (\ref{BB}) guarantees that the vector
$\mB(u_1)\dots\mB(u_M)\refs$ is a symmetric quantity in its arguments.
Thus, the operator $\mB(u)$ is indeed a good raising operator candidate.
The relations (\ref{BB1}) and (\ref{BE}) do not have an analogous
counterpart in the six-vertex model, but they play a fundamental role in the
algebraic Bethe ansatz analysis for the present case.

At this point, it is convenient to introduce a shorthand notation, as follows:
\begin{itemize}
\item For a set of $M$ rapidities, we will use the notation $\bar 
u=\left\{u_1,\dots,u_M\right\}$, where the cardinality 
of $\bar u$ is $\#\bar u=M$.
\item If the $i$th rapidity is dropped from the set $\bar u$, we denote $\bar u_i=\left\{u_1,\dots,u_{i-1},u_{i+1},\dots,u_M\right\}$.
\end{itemize}
In addition, let us introduce the following strings of operators,
\be
&&B^M(\bar u)=\prod_{i=1}^M\mB(u_i)\,,\non\\
&&B_{i}^M(\left\{u,\bar u_i\right\})=\mB(u)\prod_{j\neq i}^M\mB(u_j)\,,\non\\
&&\bar B_{i}^M(\left\{u,\bar u\right\})=\prod_{j=0}^{i-1}\mB(u_j)\mB_1(u_i)\mB_2(u_{i+1})\prod_{j=i+2}^M\mB(u_j)\,,\non\\
&&\tilde B_{i}^M(\left\{u,\bar u\right\})=\prod_{j=0}^{i-1}\mB(u_j)\mE(u_i)\prod_{j=i+1}^M\mB(u_j)\,,
\ee
for $\#\bar u=M$ and where we identify $u_0\equiv u$. Let us also introduce the vectors
\be\label{BV}
|\bar u\rangle=B^M(\bar u)\refs\,,
\ee
and
\be\label{auxBV}
|\left\{u,\bar u_i\right\}\rangle=B_{i}^M(\left\{u,\bar u_i\right\})\refs\,.
\ee

Following the previous discussion, we propose that  
the Bethe vectors are given by (\ref{BV}).
We therefore need to compute the action of the transfer matrix 
(\ref{transfer}) on this vector.
The result is obtained as a consequence of the following proposition
\begin{prop}\label{prop1}
The action of the operator $\mA(u)$ on the string $B^M(\bar u)$, with $\#\bar u=M$, is given by,
\be\label{AonBstring}
&&\mA(u)B^M(\bar u)=B^M(\bar 
u)\mA(u)\prod_{i=1}^Mf(u,u_i)+\sum_{i=1}^M\hat F_i^M(\left\{u,\bar 
u\right\})+\sum_{i=2}^M\hat Z_i^M(\left\{u,\bar u\right\})\non\\
&&\qquad+
\sum_{i=0}^{M-1}r_i\bar B_i^M(\left\{u,\bar u\right\})
+\sum_{i=1}^{M-1}s_i\tilde B_i^M(\left\{u,\bar u\right\})+\alpha^{M}(\left\{u,\bar u\right\})\tilde B_M^M(\left\{u,\bar u\right\})\,,
\ee
while the action of the operator $\mD(u)$ on the string $B^M(\bar u)$ is given by,
\be\label{DonBstring}
&&\mD(u)B^M(\bar u)=B^M(\bar u)\mD(u)\prod_{i=1}^Mh(u,u_i)+\sum_{i=1}^M\hat G_i^M(\left\{u,\bar u\right\})-Q^{-2}a(u)\sum_{i=2}^M\hat Z_i^M(\left\{u,\bar u\right\})
\non\\
&&\qquad-	
Q^{-2}a(u)\sum_{i=0}^{M-1}r_i\bar B_i^M(\left\{u,\bar u\right\})
-Q^{-2}a(u)\sum_{i=1}^{M-1}s_i\tilde B_i^M(\left\{u,\bar u\right\})+\delta^{M}(\left\{u,\bar u\right\})\tilde B_M^M(\left\{u,\bar u\right\})\non\\&&
\qquad-Q^{-2}\mE(u)B^M(\bar u)\,,
\ee
where
\be
&&\hat F_i^M(\left\{u,\bar u\right\})=B_i^{M}(\left\{u,\bar u_i\right\})\mA(u_i)f_1(u,u_i)\prod_{j\neq i}^Mf(u_i,u_j)
\non\\&&\qquad+B_i^{M}(\left\{u,\bar u_i\right\})\mD(u_i)f_2(u,u_i)\prod_{j\neq i}^Mh(u_i,u_j)
\ee
and
\be
&&\hat G_i^M(\left\{u,\bar u\right\})=B_i^{M}(\left\{u,\bar u_i\right\})\mA(u_i)h_2(u,u_i)\prod_{j\neq i}^Mf(u_i,u_j)
\non\\&&\qquad+B_i^{M}(\left\{u,\bar u_i\right\})\mD(u_i)h_1(u,u_i)\prod_{j\neq i}^Mh(u_i,u_j)\,.
\ee
The explicit expressions for the
functions $\hat Z_i^ M$, $r_i$, $s_i$, $\alpha^{M}$ and $\delta^{M}$ are not relevant for the following and are collected in appendix \ref{sec:functions}.
\end{prop}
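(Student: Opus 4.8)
\textit{Proof idea.} The plan is to prove \eqref{AonBstring} and \eqref{DonBstring} simultaneously by induction on $M=\#\bar u$; the two statements are coupled because each of the exchange relations \eqref{AB} and \eqref{DB} produces, besides its ``wanted'' term, contributions involving the \emph{other} diagonal operator together with $\mE$ and $\mB_1\mB_2$. The base case $M=1$ is nothing but \eqref{AB} and \eqref{DB}: for $M=1$ the objects $B_1^1$, $\bar B_0^1$, $\tilde B_1^1$ reduce to $\mB(u)$, $\mB_1(u)\mB_2(u_1)$, $\mB(u)\mE(u_1)$, so comparing \eqref{AonBstring}--\eqref{DonBstring} with \eqref{AB}--\eqref{DB} fixes $r_0=-1$, $\alpha^1=f_3(u,u_1)$, $\delta^1=h_3(u,u_1)$, with no $\hat Z$ term present.

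For the inductive step I would write $B^M(\bar u)=\mB(u_1)B^{M-1}(\bar u_1)$ and commute $\mA(u)$ (resp.\ $\mD(u)$) through $\mB(u_1)$ with \eqref{AB} (resp.\ \eqref{DB}). This produces: the wanted term $f(u,u_1)\mB(u_1)\mA(u)B^{M-1}(\bar u_1)$, treated with the inductive hypothesis for $\mA(u)$ on the shorter string; the terms $\mB(u)\mA(u_1)B^{M-1}(\bar u_1)$ and $\mB(u)\mD(u_1)B^{M-1}(\bar u_1)$, treated with the inductive hypotheses for $\mA(u_1)$ and $\mD(u_1)$ respectively; the term proportional to $\mB(u)\mE(u_1)B^{M-1}(\bar u_1)$, which is already $\tilde B_1^M(\{u,\bar u\})$; and the term $-\mB_1(u)\mB_2(u_1)B^{M-1}(\bar u_1)=-\bar B_0^M(\{u,\bar u\})$. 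In the $\mD$ recursion the last term of \eqref{DB} gives $-Q^{-2}\mE(u)B^M(\bar u)$ directly. Everywhere, \eqref{BB}, \eqref{BB1} and \eqref{BE} are used to reorganise the resulting operator strings into the canonical forms $B_i^M$, $\bar B_i^M$, $\tilde B_i^M$; in particular \eqref{BE} turns the $\mB(u_1)\mE(u)$ generated by the inductive hypothesis into $\mB(u)\mE(u_1)$, i.e.\ into a further $\tilde B_1^M$ contribution.

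What is left is to collect, for each canonical string, the total coefficient produced above and check it against \eqref{AonBstring}--\eqref{DonBstring}. The easy parts: the pure $\mA(u)$ (resp.\ $\mD(u)$) term comes solely from iterating the leading term of \eqref{AB} (resp.\ \eqref{DB}), giving $\prod_{i=1}^M f(u,u_i)$ (resp.\ $\prod_{i=1}^M h(u,u_i)$); and $\hat F_1^M$ (resp.\ $\hat G_1^M$) comes entirely from the leading terms of the $\mA(u_1)$- and $\mD(u_1)$-inductive hypotheses. The content sits in the families $\hat F_i^M,\hat G_i^M$ with $i\ge2$ and in $\hat Z_i^M$, $r_i$, $s_i$, which are assembled from the subleading parts of all three inductive hypotheses and hence satisfy one-step recursions; I would read off these recursions, insert the closed forms proposed in appendix \ref{sec:functions}, and verify consistency by means of functional identities among the coefficients \eqref{coeffj}, \eqref{coefhj} — identities of the schematic type $f_1(u,u_1)f(u_1,v)+f(u,u_1)f_1(u,v)=f_1(u,v)f(v,u_1)+\dots$ that collapse a double exchange to a single one. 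The slot $i=M$ must be treated separately, because an $\mE$ or an $\mB_1\mB_2$ reaching the right end of the string cannot arise from the generic step; this is what makes $\alpha^M$ and $\delta^M$ distinct from the $s_i$ and $r_i$.

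The main obstacle I anticipate is organisational rather than conceptual: the extra $\mB(u)\mE(v)$ and $\mB_1(u)\mB_2(v)$ terms in \eqref{AB}--\eqref{DB}, which have no six-vertex analogue, make the recursion branch into many contributions, and maintaining a coherent labelling of the positions inside $\bar B_i^M$ and $\tilde B_i^M$ across the iteration is the delicate point — together with the bank of identities among $f,f_1,f_2,f_3$ and $h,h_1,h_2,h_3$ that the resummation requires. The remaining computational details I would relegate to appendix \ref{sec:proofAonB}.
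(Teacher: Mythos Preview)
Your strategy—induction on $M$ using the exchange relations \eqref{AB}--\eqref{BE}, with \eqref{BB1} and \eqref{BE} as the tools for reorganising the unusual $\mB_1\mB_2$ and $\mE$ strings—is correct, and your reading of the base case is accurate. The one substantive difference from the paper is the \emph{direction} in which you peel the string. You write $B^M(\bar u)=\mB(u_1)B^{M-1}(\bar u_1)$, commute $\mA(u)$ (or $\mD(u)$) past $\mB(u_1)$ first, and then invoke the induction hypothesis on the shorter string; this forces you to apply the hypothesis \emph{three} times (for $\mA(u)$, $\mA(u_1)$ and $\mD(u_1)$ acting on $B^{M-1}(\bar u_1)$), and hence truly requires the simultaneous induction you announce. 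The paper instead peels from the right, writing $B^{M+1}(\bar{\bar u})=B^M(\bar u)\,\mB(u_{M+1})$, applying the inductive hypothesis \emph{once}, and then commuting each resulting right-most diagonal operator through the single extra $\mB(u_{M+1})$. The gain is organisational: for $i<M$ the products $\bar B_i^M\cdot\mB(u_{M+1})$ and $\tilde B_i^M\cdot\mB(u_{M+1})$ are already the $(M{+}1)$-strings with no index shift, so only the boundary slot $i=M$ needs \eqref{BB1} or \eqref{BE}; and the proofs of \eqref{AonBstring} and \eqref{DonBstring} decouple (each uses only the exchange relations, not the other's inductive hypothesis). The ``bank of identities'' you anticipate then collapses to five cleanly isolated relations (\eqref{idaa}--\eqref{idbb} in appendix \ref{sec:proofAonB}), each checkable by a residue-plus-asymptotics argument. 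Your left-peeling route would work, but the bookkeeping you flag as the main obstacle—coherent labelling across shifted indices, and recombining contributions from three separate inductive expansions—is substantially heavier than it needs to be; reversing the direction of the recursion removes most of it.
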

\begin{proof}
The proof is obtained by induction on $M$ and by the use of the commutation relations (\ref{AB}), (\ref{DB}),
(\ref{BB}), (\ref{BB1}) and (\ref{BE}). We provide some details in appendix \ref{sec:proofAonB}.
\end{proof}

We note an intricate structure of the operator relations (\ref{AonBstring}) and (\ref{DonBstring}),
with many ``unwanted'' terms.
However, the final off-shell equation satisfied by the Bethe vector is amazingly simple, since many of terms from (\ref{AonBstring})
cancel with those from (\ref{DonBstring}). Indeed, we can now easily prove the following proposition
\begin{prop}\label{prop2}
The off-shell equation for the transfer matrix (\ref{transfer}) acting on the Bethe vector (\ref{BV}) is given by
\be\label{offshell}
t(u)|\bar u\rangle=\Lambda(\left\{u,\bar u\right\})|\bar u\rangle+\sum_{i=1}^MH(u,u_i)E(\left\{u_i,\bar u_i\right\})|\left\{u,\bar u_i\right\}\rangle\,,
\ee
where
\be\label{Lambda}
\Lambda(\left\{u,\bar u\right\})=a(u)\Lambda_1(u)\prod_{i=1}^Mf(u,u_i)+d(u)\Lambda_2(u)\prod_{i=1}^Mh(u,u_i)\,,
\ee
\be\label{BEq}
E(\left\{u_i,\bar u_i\right\})=\Lambda_1(u_i)\prod_{j\neq i}^Mf(u_i,u_j)-\Lambda_2(u_i)\prod_{j\neq i}^Mh(u_i,u_j)\,,
\ee
and
\be\label{Hdef}
H(u,v)=(q-q^{-1})\frac{\omega(q^2u^2)}{\omega(uv^{-1})\omega(quv)}d(v)\,.
\ee
The functions entering equations (\ref{Lambda}) and (\ref{BEq}) are given by (\ref{ddef}), (\ref{adef}), (\ref{Lambda12}), (\ref{coeffj}) and
(\ref{coefhj}). 
\end{prop}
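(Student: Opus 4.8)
The plan is to substitute the two operator identities of Proposition~\ref{prop1} into the decomposition (\ref{transfer}), $t(u)=a(u)\mA(u)+Q^2\mD(u)+\mE(u)$, apply the result to $\refs$, and let the ``unwanted'' terms cancel in pairs. Beyond Proposition~\ref{prop1} the only inputs I would need are the reference-state relations (\ref{doublerep}), $\mA(u)\refs=\Lambda_1(u)\refs$, $\mD(u)\refs=Q^{-2}d(u)\Lambda_2(u)\refs$, $\mE(u)\refs=0$, together with their immediate consequences $B_i^M(\{u,\bar u_i\})\mA(u_i)\refs=\Lambda_1(u_i)\,|\{u,\bar u_i\}\rangle$ and $B_i^M(\{u,\bar u_i\})\mD(u_i)\refs=Q^{-2}d(u_i)\Lambda_2(u_i)\,|\{u,\bar u_i\}\rangle$, with $|\{u,\bar u_i\}\rangle$ as in (\ref{auxBV}).

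The key observation is a coincidence of coefficients: in (\ref{AonBstring}) and (\ref{DonBstring}), every string of type $\hat Z_i^M$ ($2\le i\le M$), $\bar B_i^M$ ($0\le i\le M-1$) and $\tilde B_i^M$ ($1\le i\le M-1$) enters $\mD(u)B^M(\bar u)$ with exactly $-Q^{-2}a(u)$ times the coefficient with which it enters $\mA(u)B^M(\bar u)$. Hence in the combination $a(u)\mA(u)+Q^2\mD(u)$ all of them appear multiplied by $a(u)+Q^2\bigl(-Q^{-2}a(u)\bigr)=0$ and drop out identically, \emph{at the operator level}, so the explicit forms of $\hat Z_i^M$, $r_i$, $s_i$ are irrelevant. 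Likewise, the term $-Q^{-2}\mE(u)B^M(\bar u)$ in (\ref{DonBstring}), weighted by $Q^2$, cancels the $\mE(u)B^M(\bar u)$ produced by the last summand of $t(u)$. The one remaining unwanted structure, $a(u)\alpha^M\,\tilde B_M^M+Q^2\delta^M\,\tilde B_M^M$, is annihilated once applied to the reference state, since $\tilde B_M^M(\{u,\bar u\})\refs=\prod_{j=0}^{M-1}\mB(u_j)\,\mE(u_M)\refs=0$ because the factor $\mE(u_M)$ sits at the rightmost position; so neither $\alpha^M$ nor $\delta^M$ is needed either.

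What survives after these cancellations is the diagonal piece together with the $\hat F_i^M$, $\hat G_i^M$ contributions. The diagonal piece $\bigl[a(u)B^M(\bar u)\mA(u)\prod_{i=1}^M f(u,u_i)+Q^2B^M(\bar u)\mD(u)\prod_{i=1}^M h(u,u_i)\bigr]\refs$ collapses by (\ref{doublerep}) to $\Lambda(\{u,\bar u\})\,|\bar u\rangle$, with $\Lambda$ as in (\ref{Lambda}). Acting $a(u)\sum_i\hat F_i^M+Q^2\sum_i\hat G_i^M$ on $\refs$ and reading off the coefficient of $|\{u,\bar u_i\}\rangle$ gives
\begin{align*}
&\Lambda_1(u_i)\prod_{j\neq i}^M f(u_i,u_j)\,\bigl[a(u)f_1(u,u_i)+Q^2h_2(u,u_i)\bigr]\\
&\qquad+Q^{-2}d(u_i)\Lambda_2(u_i)\prod_{j\neq i}^M h(u_i,u_j)\,\bigl[a(u)f_2(u,u_i)+Q^2h_1(u,u_i)\bigr]\,.
\end{align*}
Comparing with the asserted right-hand side of (\ref{offshell}) reduces the proposition to the two scalar identities
\begin{align*}
a(u)f_1(u,v)+Q^2h_2(u,v)&=H(u,v)\,,\\
Q^{-2}d(v)\,\bigl[a(u)f_2(u,v)+Q^2h_1(u,v)\bigr]&=-H(u,v)\,,
\end{align*}
with $a,d,f_j,h_j,H$ as in (\ref{adef}), (\ref{ddef}), (\ref{coeffj}), (\ref{coefhj}), (\ref{Hdef}); granting them, the coefficient of $|\{u,\bar u_i\}\rangle$ becomes $H(u,u_i)\bigl[\Lambda_1(u_i)\prod_{j\neq i}^M f(u_i,u_j)-\Lambda_2(u_i)\prod_{j\neq i}^M h(u_i,u_j)\bigr]=H(u,u_i)E(\{u_i,\bar u_i\})$, which is exactly (\ref{offshell}).

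I expect the verification of these two identities to be the only genuinely computational step. After clearing the common factor $\omega(q^2u^2)\omega(v^2)/[\omega(qu^2)\omega(qv^2)]$, the explicit $Q$-dependence on the left has to collapse; this is where the defining constraint $1+Q^2+Q^{-2}=-(q+q^{-1})$ enters, after which what remains follows from elementary bilinear relations for $\omega(x)=x-x^{-1}$ of the form $(a-a^{-1})(b-b^{-1})=(ab+a^{-1}b^{-1})-(ab^{-1}+a^{-1}b)$. For instance the first identity reduces to $\omega(qvu^{-1})\omega(quv)+\omega(q^2uv)\omega(uv^{-1})=(q-q^{-1})\omega(qu^2)$. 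None of this is deep, which is precisely why the proposition is ``easily proved'' once Proposition~\ref{prop1} is in hand.
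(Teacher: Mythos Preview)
Your proposal is correct and follows essentially the same route as the paper's own proof: combine the two operator formulas of Proposition~\ref{prop1} in the transfer-matrix combination $a(u)\mA(u)+Q^{2}\mD(u)+\mE(u)$, observe that the $\hat Z_i^M$, $\bar B_i^M$, $\tilde B_i^M$ and $\mE(u)B^M$ contributions cancel at the operator level while the $\tilde B_M^M$ piece dies on $\refs$, and reduce the surviving $\hat F_i^M$, $\hat G_i^M$ terms to the two scalar identities $a(u)f_1(u,v)+Q^{2}h_2(u,v)=H(u,v)$ and $a(u)f_2(u,v)+Q^{2}h_1(u,v)=-Q^{2}d(v)^{-1}H(u,v)$. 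The paper states exactly these identities and invokes the same reference-state relations; your write-up is simply more explicit about why the unwanted strings drop out.
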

\begin{proof}
Applying (\ref{AonBstring}) and (\ref{DonBstring}) on the reference state (\ref{refstate}), taking into account
(\ref{doublerep}), (\ref{BV}) and (\ref{auxBV}), and the identities,
\be
&&a(u)f_1(u,v)+Q^2h_2(u,v)=H(u,v)\,,\non\\&& a(u)f_2(u,v)+Q^2h_1(u,v)=-\frac{Q^2}{d(v)}H(u,v)\,,
\ee
we obtain (\ref{offshell}). Note that the terms with 
$\alpha^M$ and $\delta^M$ separately vanish when acting on the 
reference state since $\tilde B_M^M=\prod_{j=0}^{M-1}\mB(u_j)\mE(u_M)$, and $\mE$ annihilates the reference state.
\end{proof}

The result (\ref{offshell}) has been conjectured for 
arbitrary values of spin
in \cite{Nepomechie:2016ejv}\footnote{We use here
a slightly different notation compared with \cite{Nepomechie:2016ejv}. In particular, the functions
$a(u)$ and $d(u)$ are related as follows: 
$a_{\textrm{previous}}(u)=a_{\textrm{now}}(u)\Lambda_1(u)$ and $d_{\textrm{previous}}(u)=d_{\textrm{now}}(u)\Lambda_2(u)$.}. By imposing $E(\left\{u_i,\bar u_i\right\})=0$ for $i=1,\dots M$ (the Bethe equations), we obtain the eigenvalues (\ref{Lambda}) and the eigenvectors (\ref{BV})
of the transfer matrix (\ref{transfer}). It is remarkable that the 
off-shell equation (\ref{offshell}) has exactly the same form 
as the analogous relation
for the XXZ spin-$\frac{1}{2}$ chain with $U_qsl(2)$ symmetry \cite{Kulish:1991np}, see \textit{e.g.} equation (A.17) of \cite{Gainutdinov:2015vba}.

\subsection{Dual Bethe vector}
We can follow a similar procedure to obtain the dual Bethe vectors.
Indeed, defining
\be
\drefs={\left(
\begin{array}{cccc}1 & 0 & \cdots & 0
\end{array}\right)}^{\otimes N}
\ee
such that $\drefs0\rangle=1$,
we can obtain, as before,
\be\label{ddoublerep}
&&\drefs\mA(u)=\drefs\Lambda_1(u)\,,\non\\
&&\drefs\mD(u)=\drefs Q^{-2}d(u)\Lambda_2(u)\,,\non\\
&&\drefs\mE(u)=0\,,
\ee
where $\Lambda_1(u)$ and $\Lambda_2(u)$ are given by (\ref{Lambda12}). We also have
 \be\label{bcaction}
&&\drefs\mB(u)=\drefs\mB_1(u)=\drefs\mB_2(u)=\drefs\mC_1(u)=0\,.
\ee
For this case,
the needed commutation relations are \footnote{The commutation relations
now follow from $\textrm{eq}[3,1]$, $\textrm{eq}[7,1]$,  $\textrm{eq}[8,1]$, $\textrm{eq}[9,1]$, $\textrm{eq}[6,4]$ and $\textrm{eq}[9,7]$.}
\be\label{CA}
&&\mC(v)\mA(u)=f(u,v)\mA(u)\mC(v)+f_1(u,v)\mA(v)\mC(u)+f_2(u,v)\mD(v)\mC(u)\non\\&&\qquad\qquad\qquad+
f_3(u,v)\mE(v)\mC(u)-\mC_2(v)\mC_1(u)\,,
\ee
\be\label{CD}
&&\mC(v)\mD(u)=h(u,v)\mD(u)\mC(v)+h_1(u,v)\mD(v)\mC(u)+h_2(u,v)\mA(v)\mC(u)\non\\&&\qquad\qquad\qquad+
h_3(u,v)\mE(v)\mC(u)+Q^{-2}a(u)\mC_2(v)\mC_1(u)-Q^{-2}\mC(v)\mE(u)\,,
\ee
\be\label{CC}
&&\mC(v)\mC(u)=\mC(u)\mC(v)\,,
\ee
\be\label{C1C}
&&\mC_1(v)\mC(u)=\mC_1(u)\mC(v)\,,
\ee
\be\label{EC}
&&\mE(v)\mC(u)=\mE(u)\mC(v)\,.
\ee
Let us also introduce
\be
&&C^M(\bar u)=\prod_{i=M}^1\mC(u_i)\,,\non\\
&&C_{i}^M(\left\{u,\bar u_i\right\})=\prod_{j=M,j\neq i}^1\mC(u_j)\mC(u)\,,\non\\
&&\bar C_{i}^M(\left\{u,\bar u\right\})=\prod_{j=M}^{i+2}\mC(u_j)\mC_2(u_{i+1})\mC_1(u_{i})\prod_{j=i-1}^0\mC(u_j)\,,\non\\
&&\tilde C_{i}^M(\left\{u,\bar u\right\})=\prod_{j=M}^{i+1}\mC(u_j)\mE(u_i)\prod_{j=i-1}^0\mC(u_j)\,,
\ee
for $\#\bar u=M$ and where we identify again $u_0\equiv u$. In the above definitions, the product indices run
backwards\footnote{Throughout this subsection, we use the following ordering of the rapidities: $\left\{u_M,\dots,u_1\right\}$. While for the
products of operators $C^M$ and $C_i^M$ this ordering is irrelevant 
thanks to (\ref{CC}), it is important to
maintain this ordering for
the auxiliary products $\bar C_i^M$ and $\tilde C_i^M$, when compared with $\bar B_i^M$ and $\tilde B_i^M$.}. Let us define
\be\label{dBV}
\langle \bar u| = \drefs C^M(\bar u)\,,
\ee
and
\be\label{auxdBV}
\langle \left\{u,\bar u_i\right\}|= \drefs C_{i}^M(\left\{u,\bar u_i\right\})\,.
\ee
We use (\ref{dBV}) as the dual Bethe vector, and
compute the action of the transfer matrix (\ref{transfer}) on it.
The result is
\begin{prop}\label{prop3}
The (left) action of the operator $\mA(u)$ on the string $C^M(\bar u)$, with $\#\bar u=M$, is given by
\be\label{AonCstring}
&&C^M(\bar u)\mA(u)=\mA(u)C^M(\bar u)\prod_{i=1}^Mf(u,u_i)+\sum_{i=1}^M\check F_i^M(\left\{u,\bar u\right\})+\sum_{i=2}^M\check Z_i^M(\left\{u,\bar u\right\})\non\\
&&\qquad+
\sum_{i=0}^{M-1}r_i\bar C_i^M(\left\{u,\bar u\right\})
+\sum_{i=1}^{M-1}s_i\tilde C_i^M(\left\{u,\bar 
u\right\})+\alpha^{M}(\left\{u,\bar u\right\})\tilde 
C_M^M(\left\{u,\bar u\right\}) \,,
\ee
while the (left) action of the operator $\mD(u)$ on the string $C^M(\bar u)$ is given by
\be\label{DonCstring}
&&C^M(\bar u)\mD(u)=\mD(u)C^M(\bar u)\prod_{i=1}^Mh(u,u_i)+\sum_{i=1}^M\check G_i^M(\left\{u,\bar u\right\})-Q^{-2}a(u)\sum_{i=2}^M\check Z_i^M(\left\{u,\bar u\right\})
\non\\
&&\qquad-	
Q^{-2}a(u)\sum_{i=0}^{M-1}r_i\bar C_i^M(\left\{u,\bar u\right\})
-Q^{-2}a(u)\sum_{i=1}^{M-1}s_i\tilde C_i^M(\left\{u,\bar u\right\})+\delta^{M}(\left\{u,\bar u\right\})\tilde C_M^M(\left\{u,\bar u\right\})\non\\&&
\qquad-Q^{-2}C^M(\bar u)\mE(u) \,,
\ee
where
\be
&&\check F_i^M(\left\{u,\bar u\right\})=\mA(u_i)C_i^{M}(\left\{u,\bar u_i\right\})f_1(u,u_i)\prod_{j\neq i}^Mf(u_i,u_j)
\non\\&&\qquad+\mD(u_i)C_i^{M}(\left\{u,\bar 
u_i\right\})f_2(u,u_i)\prod_{j\neq i}^Mh(u_i,u_j) \,,
\ee
and
\be
&&\check G_i^M(\left\{u,\bar u\right\})=\mA(u_i)C_i^{M}(\left\{u,\bar u_i\right\})h_2(u,u_i)\prod_{j\neq i}^Mf(u_i,u_j)
\non\\&&\qquad+\mD(u_i)C_i^{M}(\left\{u,\bar u_i\right\})h_1(u,u_i)\prod_{j\neq i}^Mh(u_i,u_j)\,.
\ee
The explicit expressions for the
functions $\check Z_i^ M$, $r_i$, $s_i$, $\alpha^{M}$ and $\delta^{M}$ are not relevant for the following and are given in appendix \ref{sec:functions}.
\end{prop}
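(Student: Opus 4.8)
The plan is to prove Proposition~\ref{prop3} by induction on $M$, mirroring step for step the proof of Proposition~\ref{prop1}, with the dual exchange relations (\ref{CA}), (\ref{CD}), (\ref{CC}), (\ref{C1C}) and (\ref{EC}) playing the role of (\ref{AB}), (\ref{DB}), (\ref{BB}), (\ref{BB1}) and (\ref{BE}). The reason this works is that the two families of relations are \emph{formally identical}: (\ref{CA})--(\ref{EC}) are obtained from (\ref{AB})--(\ref{BE}) by reversing the order of every product and performing the substitution $\mB\mapsto\mC$, $\mB_1\mapsto\mC_1$, $\mB_2\mapsto\mC_2$, while leaving $\mA$, $\mD$, $\mE$ and all of the scalar coefficients (\ref{coeffj})--(\ref{coefhj}) untouched. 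This is the manifestation of an order-reversing linear map $\phi$ on the operator algebra with $\phi(\mA)=\mA$, $\phi(\mD)=\mD$, $\phi(\mE)=\mE$, $\phi(\mB)=\mC$, $\phi(\mB_1)=\mC_1$, $\phi(\mB_2)=\mC_2$ (realizable, if desired, as the transposition on the quantum space $(\mathbb{C}^3)^{\otimes N}$ twisted by the matrix $M=\diag(Q^{-2},1,Q^2)$ of (\ref{t}), using the crossing properties of the $R$-matrix (\ref{Rmatrix})). In practice, applying $\phi$ to the derivation of (\ref{AonBstring})--(\ref{DonBstring}) given in appendix~\ref{sec:proofAonB} --- that is, reversing every operator product and swapping $\mB\leftrightarrow\mC$ at each step --- produces the derivation of (\ref{AonCstring})--(\ref{DonCstring}).

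The first step is to check that $\phi$ carries the operator strings of the Bethe-vector subsection onto those of the dual subsection: since $\phi$ reverses products, the increasing string $B^M(\bar u)=\prod_{i=1}^M\mB(u_i)$ goes to the decreasing string $\prod_{i=M}^1\mC(u_i)=C^M(\bar u)$, and likewise $B_i^M(\{u,\bar u_i\})\mapsto C_i^M(\{u,\bar u_i\})$, $\bar B_i^M(\{u,\bar u\})\mapsto\bar C_i^M(\{u,\bar u\})$ and $\tilde B_i^M(\{u,\bar u\})\mapsto\tilde C_i^M(\{u,\bar u\})$ --- which is exactly why the definitions of $C^M$, $\bar C_i^M$ and $\tilde C_i^M$ were set up with backward-running indices (cf.\ the footnote accompanying those definitions). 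The second step is the induction on $M$. The base case $M=1$ is just (\ref{CA}) and (\ref{CD}) themselves, rewritten in the notation of (\ref{AonCstring})--(\ref{DonCstring}). For the inductive step one peels a single $\mC$ factor off $C^M(\bar u)$, commutes $\mA(u)$ (resp.\ $\mD(u)$) past it using (\ref{CA}) (resp.\ (\ref{CD})), applies the induction hypothesis to the leading term, and reorganizes the $\mC$, $\mC_1$, $\mC_2$ and $\mE$ factors in the remaining terms using (\ref{CC}), (\ref{C1C}) and (\ref{EC}); the wanted and unwanted terms then collect and cancel exactly as in the proof of Proposition~\ref{prop1}. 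Because $\phi$ fixes every scalar coefficient, the functions $\check F_i^M$, $\check G_i^M$, $\check Z_i^M$ and the scalars $r_i$, $s_i$, $\alpha^M$, $\delta^M$ appearing in (\ref{AonCstring})--(\ref{DonCstring}) come out to be exactly the $\phi$-images of their hatted counterparts in Proposition~\ref{prop1}; in particular $r_i$, $s_i$, $\alpha^M$ and $\delta^M$ are literally the same functions (collected in appendix~\ref{sec:functions}).

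The step that will need the most care --- although it is bookkeeping rather than a conceptual obstacle, given that Proposition~\ref{prop1} is already available --- is keeping the ordering conventions aligned so that the unwanted terms pair up correctly under $\phi$: the auxiliary strings $\bar C_i^M$ and $\tilde C_i^M$ must be written with their factors in exactly the order dual to that of $\bar B_i^M$ and $\tilde B_i^M$, or else the terms produced when a $\mC$ is transported leftward will fail to reassemble into the structure of (\ref{AonCstring})--(\ref{DonCstring}). Once this correspondence is fixed, no fresh analysis of the reflection algebra is required and Proposition~\ref{prop3} follows from Proposition~\ref{prop1} by symmetry alone. (A more conceptual alternative would be to first establish the appropriate transpose/crossing identity for the double-row monodromy matrix --- schematically relating $U_0(u)^{t_0}$ to the $M$-twisted quantum-space transpose of $U_0(u)$ --- which would promote $\phi$ to a genuine algebra anti-automorphism and collapse the proof of Proposition~\ref{prop3} to a single application of $\phi$ to the identities (\ref{AonBstring})--(\ref{DonBstring}); the cost is verifying that identity directly from the $R$-matrix (\ref{Rmatrix}).)
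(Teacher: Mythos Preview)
Your proposal is correct and matches the paper's own approach: the paper's proof of Proposition~\ref{prop3} simply reads ``The proof is similar to the proof of Proposition~\ref{prop1}; the needed commutation relations are now (\ref{CA}), (\ref{CD}), (\ref{CC}), (\ref{C1C}) and (\ref{EC}),'' and your plan is precisely a fleshed-out version of that remark, with the order-reversing map $\phi$ making explicit why ``similar'' actually works. Your additional observation that $\phi$ can be realized as a genuine anti-automorphism (quantum-space transpose twisted by $M$) is a nice conceptual bonus not spelled out in the paper.
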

\begin{proof}
The proof is similar to the proof of Proposition \ref{prop1}; the needed commutation relations are now (\ref{CA}), (\ref{CD}), (\ref{CC}), (\ref{C1C}) and
(\ref{EC}).
\end{proof}
Using the previous proposition, we obtain
\begin{prop}\label{prop4}
The (left) off-shell equation for the transfer matrix (\ref{transfer}) acting on the Bethe vector (\ref{dBV}) is given by
\be\label{leftoffshell}
\langle \bar u|t(u)=\langle \bar u|\Lambda(\left\{u,\bar u\right\})+\sum_{i=1}^M\langle \left\{u,\bar u_i\right\}|H(u,u_i)E(\left\{u_i,\bar u_i\right\})\,,
\ee
with the same functions as in (\ref{offshell}).
\end{prop}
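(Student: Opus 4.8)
The plan is to transpose, essentially line for line, the argument that established Proposition~\ref{prop2}, now acting with the transfer matrix on the dual Bethe vector from the right. Starting from
$\langle\bar u|\,t(u)=\drefs\,C^M(\bar u)\big(a(u)\mA(u)+Q^2\mD(u)+\mE(u)\big)$,
I would first substitute the expansions (\ref{AonCstring}) and (\ref{DonCstring}) of Proposition~\ref{prop3} for $C^M(\bar u)\,\mA(u)$ and $C^M(\bar u)\,\mD(u)$, and then push $\drefs$ through to the left using (\ref{ddoublerep}).

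The ``unwanted'' terms cancel exactly as in the proof of Proposition~\ref{prop2}. In the combination $a(u)\big[C^M(\bar u)\mA(u)\big]+Q^2\big[C^M(\bar u)\mD(u)\big]$, the pieces carrying $\check Z_i^M$ for $2\le i\le M$, $r_i\,\bar C_i^M$ for $0\le i\le M-1$, and $s_i\,\tilde C_i^M$ for $1\le i\le M-1$ appear in (\ref{DonCstring}) dressed by the extra factor $-Q^{-2}a(u)$, hence drop out pairwise; and the term $-Q^{-2}\,C^M(\bar u)\mE(u)$ of (\ref{DonCstring}), once multiplied by $Q^2$, cancels the $\mE(u)$ contribution of $t(u)$. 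For the leftover $\tilde C_M^M$ terms with coefficients $\alpha^M$ and $\delta^M$, I would note that the backward-running products in its definition give $\tilde C_M^M(\{u,\bar u\})=\mE(u_M)\prod_{j=M-1}^{0}\mC(u_j)$, so $\mE(u_M)$ sits at the far left and $\drefs\,\tilde C_M^M=0$ by the last line of (\ref{ddoublerep}); this is the dual analog of the remark closing the proof of Proposition~\ref{prop2}.

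What survives is $\drefs$ applied to
\be
&&a(u)\,\mA(u)\,C^M(\bar u)\prod_{i=1}^Mf(u,u_i)+Q^2\,\mD(u)\,C^M(\bar u)\prod_{i=1}^Mh(u,u_i)\non\\
&&\qquad+\;a(u)\sum_{i=1}^M\check F_i^M(\{u,\bar u\})+Q^2\sum_{i=1}^M\check G_i^M(\{u,\bar u\})\,.\non
\ee
Using $\drefs\mA(u)C^M(\bar u)=\Lambda_1(u)\langle\bar u|$ and $\drefs\mD(u)C^M(\bar u)=Q^{-2}d(u)\Lambda_2(u)\langle\bar u|$, the first two terms assemble into $\Lambda(\{u,\bar u\})\langle\bar u|$. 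For the $i$th summand, the same two identities $a(u)f_1(u,v)+Q^2h_2(u,v)=H(u,v)$ and $a(u)f_2(u,v)+Q^2h_1(u,v)=-Q^2 H(u,v)/d(v)$ used in Proposition~\ref{prop2} turn the coefficient of $\mA(u_i)C_i^M\prod_{j\neq i}f(u_i,u_j)$ into $H(u,u_i)$ and that of $\mD(u_i)C_i^M\prod_{j\neq i}h(u_i,u_j)$ into $-Q^2 H(u,u_i)/d(u_i)$; then $\drefs\mA(u_i)C_i^M=\Lambda_1(u_i)\langle\{u,\bar u_i\}|$ and $\drefs\mD(u_i)C_i^M=Q^{-2}d(u_i)\Lambda_2(u_i)\langle\{u,\bar u_i\}|$ produce $H(u,u_i)\,E(\{u_i,\bar u_i\})\,\langle\{u,\bar u_i\}|$. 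Summing over $i$ gives (\ref{leftoffshell}).

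There is no genuinely new difficulty here beyond Proposition~\ref{prop3}: the entire content is the bookkeeping of the reversed operator orderings. The step requiring a little care is to confirm that the backward products defining $\bar C_i^M$ and $\tilde C_i^M$ are aligned with those defining $C^M$ and $C_i^M$, so that the pairwise cancellations of the $\check Z_i^M$, $\bar C_i^M$ and $\tilde C_i^M$ contributions are literal rather than merely up to reordering, and — as noted above — that $\mE(u_M)$ genuinely ends up at the left end of $\tilde C_M^M$ so that $\drefs$ annihilates it; the ordering prescription adopted in the text is precisely what guarantees this. Everything else is a transcription of the proof of Proposition~\ref{prop2}.
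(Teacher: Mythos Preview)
Your proposal is correct and follows exactly the approach indicated in the paper, which merely states that the proof is similar to that of Proposition~\ref{prop2}; you have simply made explicit the bookkeeping of the reversed operator orderings and the use of Proposition~\ref{prop3} and (\ref{ddoublerep}) in place of Proposition~\ref{prop1} and (\ref{doublerep}). In particular, your observation that the backward product convention places $\mE(u_M)$ at the far left of $\tilde C_M^M$, so that $\drefs$ kills the $\alpha^M$ and $\delta^M$ contributions, is precisely the dual of the corresponding remark in the proof of Proposition~\ref{prop2}.
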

\begin{proof}
The proof is similar to the proof of Proposition \ref{prop2}.
\end{proof}
Again, the result (\ref{leftoffshell}) has been conjectured
for arbitrary values of spin
in \cite{Nepomechie:2016ejv}. The (left) eigenvalues and eigenvectors
of the transfer matrix are obtained by imposing $E(\left\{u_i,\bar u_i\right\})=0$ for $i=1,\dots,M$.

\subsection{Scalar product}
Let us now briefly consider the scalar product between the Bethe vector (\ref{BV}) and the dual Bethe vector (\ref{dBV}). 
In the paper \cite{Nepomechie:2016ejv},
we have proposed that the scalar product
between an on-shell state $\langle \bar u|$ and an arbitrary off-shell state 
$|\bar v\rangle$ is given by
\be
\langle \bar u|\bar v\rangle
=\left(\frac{1}{2Q^{2s}}\right)^M\,
\prod_{i=1}^M\frac{\omega(u_i)^{2N}u_i\,\omega(u_i^2)}{\omega(u_i^2q)\omega(v_i^2q^2)}
\prod_{j<i}^M\frac{\omega(u_iu_jq^2)}{\omega(u_iu_j)}
\frac{\textrm{Det}_M\left(\frac{\partial}{\partial u_i}\Lambda(\left\{v_j,\bar u\right\})\right)}
{\textrm{Det}_M\left(\frac{1}{\omega(v_iu_j^{-1})\omega(v_iu_jq)}\right)}\,,
\label{slavnov}
\ee
where $\# \bar u=\# \bar v=M$ and the set $\bar u$ is a solution of the Bethe equations, \textit{i.e.}, $E(\left\{u_i,\bar u_i\right\})=0$
for $i=1,\dots,M$. Here, we have $s=1$. A formula of the type (\ref{slavnov})
is generally known as a Slavnov formula \cite{Sla89}, while its limit $v_k\rightarrow u_k$
(the square of the norm) is known as a Gaudin-Korepin formula \cite{Gaudin,PhysRevD.23.417,Korepin:1982gg}. 
For the $s=\frac{1}{2}$ chain with (diagonal) open boundary
conditions, the formula (\ref{slavnov}) was obtained in \cite{Kitanine:2007bi}
(see also \cite{Wang2002633} for the XXX chain), using a method 
different from the one in \cite{Sla89}.

The proof of the formula (\ref{slavnov}) remains an open problem for 
$s>\frac{1}{2}$. We now
briefly comment on the obstacles that we have encountered for $s=1$, which come up already
in the simplest $M=1$ case. The scalar product (\ref{slavnov}) for $M=1$ can be in principle obtained from the commutation
relation between the operators $\mC(u_1)$ and $\mB(v_1)$, which is given by\footnote{This commutation relation follows from $\textrm{eq}[7,3]$ of the the reflection algebra.}
\be
\mC(u_1)\mB(v_1)&=&\mB(v_1)\mC(u_1)\non\\
&&+x_1(u_1,v_1)\mA(u_1)\mA(v_1)+x_2(u_1,v_1)\mA(v_1)\mA(u_1)+x_3(u_1,v_1)\mD(u_1)\mA(v_1)\non\\
&&+x_4(u_1,v_1)\mA(u_1)\mD(v_1)+x_5(u_1,v_1)\mA(v_1)\mD(u_1)+x_6(u_1,v_1)\mD(u_1)\mD(v_1)\non\\
&&+y_1(u_1,v_1)\mA(v_1)\mE(u_1)+y_2(u_1,v_1)\mE(u_1)\mA(v_1)+y_3(u_1,v_1)\mE(u_1)\mD(v_1)\non\\
&&+\mB_1(v_1)\mC_1(u_1)-\mC_2(u_1)\mB_2(v_1)\,,
\label{CB}
\ee
where the coefficients are given in appendix \ref{sec:functions}.
Note that the first three lines of (\ref{CB}) are similar to the analogous relation in the six-vertex model; all the other terms are new. Applying (\ref{CB}) on the reference
state $\refs$, taking into account (\ref{doublerep}) and (\ref{caction}),  and projecting the result on $\drefs$, we obtain
\be
\langle u_1|v_1\rangle &=&
\left[x_1(u_1,v_1)+
x_2(u_1,v_1)\right]\Lambda_1(u_1)\Lambda_1(v_1)-\frac{
\omega(v_1^2)x_4(u_1,v_1)}{Q^2\omega(qv_1^2)}\Lambda_1(u_1)\Lambda_2(v_1)\non\\&&-\frac{\omega(u_1^2)\left[x_3(u_1,v_1)+x_5(u_1,v_1)\right]}{Q^2\omega(qu_1^2)}\Lambda_2(u_1)\Lambda_1(v_1)
+\frac{\omega(u_1^2)\omega(v_1^2)x_6(u_1,v_1)}{Q^4\omega(qu_1^2)\omega(qv_1^2)}\Lambda_2(u_1)\Lambda_2(v_1)
\non\\&&-\drefs\mC_2(u_1)\mB_2(v_1)\refs\,.
\label{CBref}
\ee
where we observe that most of the extra terms in (\ref{CB}) do not contribute to the
scalar product, except for $\mC_2(u_1)\mB_2(v_1)$. We now suppose
that the variable $u_1$ is a Bethe root. Under this condition, we have numerically checked (up to $N=6$) that
\be
\drefs\mC_2(u_1)=0\,.
\ee
Then, using in (\ref{CBref}) the fact that $\Lambda_1(u_1)=\Lambda_2(u_1)$ when $u_1$ is a Bethe root,
we obtain by explicit computation the right-hand side of (\ref{slavnov}) for $M=1$. For $M>1$, new terms (when compared to
the analogous relations in the six-vertex model) appear in the off-shell/off-shell scalar
product $\langle \bar u|\bar v\rangle$. All these terms, however, presumably disappear
when $\langle \bar u|$ is on-shell; the proof of this fact, which 
would be a first step towards proving the formula (\ref{slavnov}), has so far eluded us.

\section{Discussion}\label{sec:discussion}

We have considered the quantum spin-1 chain with
``free'' boundary conditions constructed from the TL algebra in the algebraic Bethe ansatz framework.
The main result of this note is the proof of the off-shell equations
satisfied by the Bethe vector, see Proposition \ref{prop2}, and by the dual Bethe vector, see Proposition \ref{prop4}.
The complexity of the proof originates from the unusual exchange relations (\ref{AB}) - (\ref{BE}); 
and we believe it is quite remarkable that they lead to such simple 
off-shell equations (\ref{offshell}). 

We note that despite of the fact that the auxiliary space is 3-dimensional, the off-shell equations
have the same form of those of the quantum-group-invariant XXZ spin-$\frac{1}{2}$ chain (which has a 2-dimensional
auxiliary space). This is a step towards a proof of the more general 
conjecture in \cite{Nepomechie:2016ejv}, which states that the off-shell
equation of TL spin chains with ``free'' boundary conditions 
associated with the spin-$s$ representation of $U_{Q}sl(2)$ is actually
universal, \textit{i.e.}, it is independent of the value of the spin. We hope that the results presented here can be
further developed in order to prove the formula (\ref{slavnov}) for the scalar product between the off-shell Bethe vector and its on-shell dual, 
which is also independent (up to a constant factor) of the value of 
the spin.

According to conventional wisdom and experience, closed chains should 
be simpler than corresponding open chains. However, we have seen that 
this is not the case for the TL model. Nevertheless,
the method presented here may also shed some light on the algebraic Bethe ansatz formulation for the
closed TL spin-1 chain with periodic boundary conditions. Interestingly,
the Yang-Baxter algebra (\ref{YBalgebra}) seems subtler than the associated reflection algebra (\ref{REalgebra}).
Indeed, the  the spectrum of the closed chain is characterized by a 
``dynamically'' generated twist, see \cite{Alcaraz:1992uq, 
Aufgebauer:2010gg, Finch:2014nxa, Finch:2015,Nepomechie:2016ejv}.
It would be interesting to obtain the Bethe vectors and the associated Bethe
equations from the Yang-Baxter algebra.

As a further direction of investigation, it may be worth to consider the algebraic Bethe ansatz formulation of the spin-1 TL chain with more
complicated boundary interactions. Both diagonal and non-diagonal reflection matrices are available \cite{LimaSantos:2010nw, Avan:2010mh}. For the former,
the procedure described in this paper can
probably be applied without 
significant changes.
For the latter, one would have to extend the modified algebraic Bethe ansatz, 
see \cite{Belliard:2014fsa,Belliard:2014rna,Avan:2015ada} for the XXZ spin-$\frac{1}{2}$ chain with
non-diagonal boundaries, or to extend the construction of the on-shell Bethe states
from the off-diagonal Bethe ansatz \cite{Wang2015}.

\section*{Acknowledgments}
The work of RN was supported in part by the National Science
Foundation under Grant PHY-1212337, and by a Cooper fellowship.
RP thanks the S\~ao Paulo Research Foundation (FAPESP),
grants \# 2014/00453-8 and \# 2014/20364-0, for financial support.
We also acknowledge the support by FAPESP and the University of Miami under the SPRINT grant \#2016/50023-5.

\appendix

\section{Functions}\label{sec:functions}

We list here some functions used in the main text
\be
&&\hat Z_i^M(\left\{u,\bar u\right\})=(q+q^{-1})
\sum_{k=2}^i
r_{k-2}
\left(-Q^{-2}d(u_i)B_i^{M}(\left\{u,\bar u_i\right\})\mA(u_i)\prod_{j=k,j\neq i}^Mf(u_i,u_j)\right.\non\\
&&\qquad+
\left.B_i^{M}(\left\{u,\bar u_i\right\})\mD(u_i)\prod_{j=k,j\neq i}^Mh(u_i,u_j)\right)\,,
\ee
\be
&&\check Z_i^M(\left\{u,\bar u\right\})=(q+q^{-1})
\sum_{k=2}^i
r_{k-2}
\left(-Q^{-2}d(u_i)\mA(u_i)C_i^{M}(\left\{u,\bar u_i\right\})\prod_{j=k,j\neq i}^Mf(u_i,u_j)\right.\non\\
&&\qquad+
\left.\mD(u_i)C_i^{M}(\left\{u,\bar u_i\right\})\prod_{j=k,j\neq i}^Mh(u_i,u_j)\right)\,,
\ee
\be
&&\alpha^{M}(\left\{u,\bar u\right\})=f_3(u,u_M)\prod_{i=1}^{M-1}f(u,u_i)\non\\&&\qquad+
\sum_{i=1}^{M-1}\left\{f_1(u,u_i)f_3(u_i,u_M)\prod_{j\neq i}^{M-1}f(u_i,u_j)+
f_2(u,u_i)h_3(u_i,u_M)\prod_{j\neq i}^{M-1}h(u_i,u_j)\right.\non\\
&&\qquad\qquad+(q+q^{-1})\sum_{k=2}^ir_{k-2}\left(-Q^{-2}d(u_i)f_3(u_i,u_M)\prod_{j=k,j\neq i}^{M-1}f(u_i,u_j)\right.\non\\
&&\qquad\qquad\qquad+\left.\left.
h_3(u_i,u_M)\prod_{j=k,j\neq i}^{M-1}h(u_i,u_j)\right)\right\}\,,
\ee
\be
&&\delta^{M}(\left\{u,\bar u\right\})=h_3(u,u_M)\prod_{i=1}^{M-1}h(u,u_i)\non\\&&\qquad+
\sum_{i=1}^{M-1}\left\{h_1(u,u_i)h_3(u_i,u_M)\prod_{j\neq i}^{M-1}h(u_i,u_j)+
h_2(u,u_i)f_3(u_i,u_M)\prod_{j\neq i}^{M-1}f(u_i,u_j)\right.\non\\
&&\qquad\qquad-(q+q^{-1})Q^{-2}a(u)\sum_{k=2}^ir_{k-2}\left(-Q^{-2}d(u_i)f_3(u_i,u_M)\prod_{j=k,j\neq i}^{M-1}f(u_i,u_j)\right.\non\\
&&\qquad\qquad\qquad+\left.\left.
h_3(u_i,u_M)\prod_{j=k,j\neq i}^{M-1}h(u_i,u_j)\right)\right\}\,,
\ee
\be
&&r_i=-\left(\frac{1+Q^2}{Q^4}\right)^i\,,
\ee
\be
&& s_i=\frac{1}{Q^2}\left(\frac{1+Q^2}{Q^4}\right)^{i-1}\,.
\ee

The coefficients of the commutation relation (\ref{CB})
are given by,
{\allowdisplaybreaks
\be
&&x_1(u,v)=\frac{\omega(u^2) \left(Q^2 \omega(q v^2)+\omega(v^2)\right) \left(\omega(quv^{-1}) +
Q^{-2}\omega(uv^{-1})\right)}{Q^2 \omega(q u^2) \omega(q v^2) \omega(vu^{-1})}\,,\non\\
&&x_2(u,v)=\frac{\omega(u^2) \omega(quv^{-1}) \left(\omega(q u v)+Q^{-2}\omega(u v)\right)}{\omega(q u^2)
\omega(uv^{-1})\omega(q u v)}\,,\non\\
&&x_3(u,v)=
-\frac{\left(Q^2 \omega(q v^2)+\omega(v^2)\right)
   \left(\omega (q u v)+Q^2 \omega (u v)\right)}{Q^2 \omega(q v^2)
   \omega(q u v)}\,,\non\\
&&x_4(u,v)=
\frac{\omega(u^2) \left(\omega(quv^{-1})+Q^{-2}\omega(uv^{-1})\right)}{\omega(q u^2) \omega(vu^{-1})}\,,\non\\
&&x_5(u,v)=\frac{\omega (u v) \left(\omega(quv^{-1})+Q^2
\omega(uv^{-1})\right)}{\omega(uv^{-1}) \omega (q
   u v)}\,,\non\\
&&x_6(u,v)=-\frac{\omega (q u v)+Q^2 \omega (u v)}{\omega (q u v)}\,,\non\\
&&y_1(u,v)=\frac{\omega(u v)}{\omega(q u v)}\,,\non\\
&&y_2(u,v)=-\frac{\omega (u v) \left(Q^2 \omega(q v^2)+\omega
   (v^2)\right)}{Q^2 \omega(q v^2) \omega(q u v)}\,,\non\\
&&y_3(u,v)=-\frac{\omega(u v)}{\omega(q u v)}\,.
\ee
}
\section{Proof of Proposition \ref{prop1}}\label{sec:proofAonB}
In this appendix we prove Proposition \ref{prop1} by induction.
\subsection{Proof of (\ref{AonBstring})}
Let us consider the relation (\ref{AonBstring}).
Its validity for $M=1$ follows directly from the commutation relations (\ref{AB}) and (\ref{DB}). Let us suppose
that (\ref{AonBstring}) is valid for arbitrary $M$,
and compute the action
\be\label{step0}
\mA(u)B^{M+1}(\bar{\bar{u}})=\mA(u)B^M(\bar u)\mB(u_{M+1})
\ee
where $\bar{\bar{u}}=\left\{\bar u,u_{M+1}\right\}$ with $\#\bar u=M$. Using the induction hypothesis (\ref{AonBstring}) in (\ref{step0}) we obtain
\be\label{step1}
&&\mA(u)B^{M+1}(\bar{\bar{u}})
=B^M(\bar u)\mA(u)\mB(u_{M+1})\prod_{i=1}^Mf(u,u_i)+
\sum_{i=1}^M\hat F_i^M(\left\{u,\bar u\right\})\mB(u_{M+1})\non\\
&&\qquad+\sum_{i=2}^M\hat Z_i^M(\left\{u,\bar u\right\})\mB(u_{M+1})+
\sum_{i=0}^{M-1}r_i\bar B_i^M(\left\{u,\bar u\right\})\mB(u_{M+1})
\non\\
&&\qquad+\sum_{i=1}^{M-1}s_i\tilde B_i^M(\left\{u,\bar 
u\right\})\mB(u_{M+1})+\alpha^{M}(\left\{u,\bar u\right\})\tilde 
B_M^M(\left\{u,\bar u\right\})\mB(u_{M+1}) \,.
\ee
The next step consists of using the commutation relations
(\ref{AB}), (\ref{DB}), (\ref{BB}), (\ref{BB1}) and (\ref{BE}) in (\ref{step1}). Let us consider each term in the right-hand side of (\ref{step1}) separately. 
We have
\be\label{partial1}
&&B^M(\bar u)\underbrace{\mA(u)\mB(u_{M+1})}_{\eq(\ref{AB})}\prod_{i=1}^Mf(u,u_i)\non\\
&&\qquad=
B^{M+1}(\bar{\bar u})\mA(u)\prod_{i=1}^{M+1}f(u,u_i)
+\gamma_a^M\,B^M(\bar u)\mB(u)\mA(u_{M+1})
+\gamma_d^M\,B^M(\bar u)
\mB(u)\mD(u_{M+1})
\non\\&&\qquad\qquad
+
\gamma_e^M\,B^M(\bar u)\mB(u)\mE(u_{M+1})
+\gamma_{b_2}^M\,B^M(\bar u)\mB_1(u)\mB_2(u_{M+1})\,,
\ee
where
\be
&&\gamma_a^M=f_1(u,u_{M+1})\prod_{i=1}^Mf(u,u_i)\,,
\ee
\be
&&\gamma_d^M=f_2(u,u_{M+1})\prod_{i=1}^Mf(u,u_i)\,,
\ee
\be
&&\gamma_e^M=f_3(u,u_{M+1})\prod_{i=1}^Mf(u,u_i)\,,
\ee
\be
&&\gamma_{b_2}^M=-\prod_{i=1}^Mf(u,u_i)\,,
\ee
are auxiliary quantities introduced for convenience\footnote{Here, and in the auxiliary functions defined hereafter, we omit the functional dependency on the
rapidities in order to lighten the 
notation.}.
The next term is given by
{\allowdisplaybreaks
\be\label{partial2}
&&\sum_{i=1}^M\hat F_i^M(\left\{u,\bar u\right\})\mB(u_{M+1})
\non\\
&&\qquad=
\sum_{i=1}^M\Bigg\{
f_1(u,u_i)B_i^{M}(\left\{u,\bar u_i\right\})\underbrace{\mA(u_i)\mB(u_{M+1})}_{\eq(\ref{AB})}\prod_{j\neq i}^Mf(u_i,u_j)
\non\\&&\qquad\qquad+f_2(u,u_i)B_i^{M}(\left\{u,\bar u_i\right\})\underbrace{\mD(u_i)\mB(u_{M+1})}_{\eq(\ref{DB})}\prod_{j\neq i}^Mh(u_i,u_j)
\Bigg\}
\non\\
&&\qquad=
\sum_{i=1}^M\Bigg\{
f_1(u,u_i)B_i^{M+1}(\left\{u,\bar{\bar u}_i\right\})\mA(u_i)\prod_{j\neq i}^{M+1}f(u_i,u_j)
\non\\&&
\qquad\qquad+
f_2(u,u_i)B_i^{M+1}(\left\{u,\bar{\bar u}_i\right\})\mD(u_i)\prod_{j\neq i}^{M+1}h(u_i,u_j)
\Bigg\}
\non\\
&&\qquad+
\sum_{i=1}^M\Bigg\{
\bigg(f_1(u,u_i)f_1(u_i,u_{M+1})\prod_{j\neq i}^Mf(u_i,u_j)\non\\&&
\qquad\qquad\qquad+
f_2(u,u_i)h_2(u_i,u_{M+1})\prod_{j\neq i}^Mh(u_i,u_j)\bigg)\underbrace{B_i^{M}(\left\{u,\bar u_i\right\})\mB(u_i)}_{\eq(\ref{BB})}\mA(u_{M+1})
\non\\&&
\qquad\qquad+
\bigg(f_1(u,u_i)f_2(u_i,u_{M+1})\prod_{j\neq i}^Mf(u_i,u_j)
\non\\&&
\qquad\qquad\qquad+
f_2(u,u_i)h_1(u_i,u_{M+1})\prod_{j\neq i}^Mh(u_i,u_j)\bigg)\underbrace{B_i^{M}(\left\{u,\bar u_i\right\})\mB(u_i)}_{\eq(\ref{BB})}\mD(u_{M+1})
\non\\&&
\qquad\qquad+
\bigg(
f_1(u,u_i)f_3(u_i,u_{M+1})\prod_{j\neq i}^Mf(u_i,u_j)\non\\&&
\qquad\qquad+
f_2(u,u_i)h_3(u_i,u_{M+1})\prod_{j\neq i}^Mh(u_i,u_j)
\bigg)
\underbrace{B_i^{M}(\left\{u,\bar u_i\right\})\mB(u_i)}_{\eq(\ref{BB})}\mE(u_{M+1})
\non\\&&
\qquad\qquad-
\bigg(
f_1(u,u_i)\prod_{j\neq i}^Mf(u_i,u_j)\non\\&&
\qquad\qquad-
Q^{-2}f_2(u,u_i)a(u_i)\prod_{j\neq i}^Mh(u_i,u_j)
\bigg)
\underbrace{B_i^{M}(\left\{u,\bar u_i\right\})\mB_1(u_i)}_{\eq(\ref{BB1})}\mB_2(u_{M+1})
\non\\&&
\qquad\qquad-
\left(Q^{-2}f_2(u,u_i)\prod_{j\neq i}^Mh(u_i,u_j)\right)
\underbrace{B_i^{M}(\left\{u,\bar u_i\right\})\mE(u_i)}_{\eq(\ref{BE})}\mB(u_{M+1})
\Bigg\}
\non\\
&&\qquad=
\sum_{i=1}^{M+1}\hat F_i^{M+1}(\{u,\bar{\bar u}\})+
\theta_a^M\,B^M(\bar u)\mB(u)\mA(u_{M+1})
\non\\&&
\qquad\qquad+
\theta_d^M\,B^M(\bar u)\mB(u)\mD(u_{M+1})
+
\theta_e^M\,
B^M(\bar u)\mB(u)\mE(u_{M+1})
\non\\&&
\qquad\qquad+
\theta_{b_2}^M\,
B^M(\bar u)\mB_1(u)\mB_2(u_{M+1})
+
\theta_b^M\,
B^M(\bar u)\mE(u)\mB(u_{M+1})\,, 
\ee
}where we identified $B_i^{M}(\left\{u,\bar u_i\right\})\mB(u_{M+1})=B_i^{M+1}(\left\{u,\bar{\bar u}_i\right\})$
and
introduced the auxiliary functions
\be
&&\theta_a^M=-f_1(u,u_{M+1})\prod_{j=1}^{M}f(u_{M+1},u_j)+\sum_{i=1}^M\bigg(f_1(u,u_i)f_1(u_i,u_{M+1})\prod_{j\neq i}^Mf(u_i,u_j)\non\\&&
\qquad+
f_2(u,u_i)h_2(u_i,u_{M+1})\prod_{j\neq i}^Mh(u_i,u_j)\bigg)\,,
\ee
\be
&&\theta_d^M=-
f_2(u,u_{M+1})\prod_{j=1}^{M}h(u_{M+1},u_j)+\sum_{i=1}^M\bigg(f_1(u,u_i)f_2(u_i,u_{M+1})\prod_{j\neq i}^Mf(u_i,u_j)
\non\\&&
\qquad+
f_2(u,u_i)h_1(u_i,u_{M+1})\prod_{j\neq i}^Mh(u_i,u_j)\bigg)\,,
\ee
\be
&&\theta_e^M=\sum_{i=1}^M\bigg(
f_1(u,u_i)f_3(u_i,u_{M+1})\prod_{j\neq i}^Mf(u_i,u_j)+
f_2(u,u_i)h_3(u_i,u_{M+1})\prod_{j\neq i}^Mh(u_i,u_j)\bigg)\,,\non\\
\ee
\be
\theta_{b_2}^M=-\sum_{i=1}^M\bigg(
f_1(u,u_i)\prod_{j\neq i}^Mf(u_i,u_j)-
Q^{-2}f_2(u,u_i)a(u_i)\prod_{j\neq i}^Mh(u_i,u_j)\bigg)\,,
\ee
\be
\theta_b^M=-\sum_{i=1}^MQ^{-2}f_2(u,u_i)\prod_{j\neq i}^Mh(u_i,u_j)\,.
\ee
We proceed in a similar way for the next term, namely,
\be\label{partial3}
&&\sum_{i=2}^M\hat Z_i^M(\left\{u,\bar u\right\})\mB(u_{M+1})
\non\\
&&\qquad=
\sum_{i=2}^M(q+q^{-1})\sum_{k=2}^ir_{k-2}\times
\Bigg\{
-Q^{-2}d(u_i)B_i^{M}(\left\{u,\bar u_i\right\})\underbrace{\mA(u_i)\mB(u_{M+1})}_{\eq(\ref{AB})}\prod_{j=k,j\neq i}^Mf(u_i,u_j)\non\\
&&\qquad\qquad+
B_i^{M}(\left\{u,\bar u_i\right\})\underbrace{\mD(u_i)\mB(u_{M+1})}_{\eq(\ref{DB})}\prod_{j=k,j\neq i}^Mh(u_i,u_j)\Bigg\}
\non\\
&&\qquad=
\sum_{i=2}^{M+1}\hat Z_i^{M+1}(\left\{u,\bar{\bar u}\right\})+\tau_a^MB^M(\bar u)\mB(u)\mA(u_{M+1})
\non\\
&&\qquad\qquad+
\tau_d^MB^M(\bar u)\mB(u)\mD(u_{M+1})
+\tau_e^MB^M(\bar u)\mB(u)\mE(u_{M+1})
\non\\
&&\qquad\qquad
+\tau_{b_2}^MB^M(\bar u)\mB_1(u)\mB_2(u_{M+1})
+\tau_b^MB^M(\bar u)\mE(u)\mB(u_{M+1}) \,,
\ee
where we used the relations (\ref{BB}), (\ref{BB1}) and (\ref{BE}) to
rewrite, respectively, the terms $B_i^{M}(\left\{u,\bar u_i\right\})\mB(u_i)$, $B_i^{M}(\left\{u,\bar u_i\right\})\mB_1(u_i)$ and
$B_i^{M}(\left\{u,\bar u_i\right\})\mE(u_i)$. The auxiliary functions $\tau_i$ are given by
\be
&&\tau_a^M=(q+q^{-1})Q^{-2}d(u_{M+1})\sum_{k=2}^{M+1}r_{k-2}\prod_{j=k}^{M}f(u_{M+1},u_j)
\non\\&&\qquad-
\sum_{i=2}^M
(q+q^{-1})\sum_{k=2}^ir_{k-2}\bigg(
Q^{-2}d(u_i)f_1(u_i,u_{M+1})\prod_{j=k,j\neq i}^Mf(u_i,u_j)
\non\\&&\qquad\qquad-h_2(u_i,u_{M+1})\prod_{j=k,j\neq i}^Mh(u_i,u_j)\bigg)\,,
\ee
\be
&&\tau_d^M=-
(q+q^{-1})\sum_{k=2}^{M+1}r_{k-2}
\prod_{j=k}^{M}h(u_{M+1},u_j)
\non\\&&\qquad-\sum_{i=2}^M(q+q^{-1})\sum_{k=2}^ir_{k-2}\bigg(Q^{-2}d(u_i)f_2(u_i,u_{M+1})\prod_{j=k,j\neq i}^Mf(u_i,u_j)
\non\\&&\qquad\qquad-
h_1(u_i,u_{M+1})\prod_{j=k,j\neq i}^Mh(u_i,u_j)\bigg)\,,
\ee
\be
&&\tau_e^M=-\sum_{i=2}^M(q+q^{-1})\sum_{k=2}^ir_{k-2}\bigg(Q^{-2}d(u_i)f_3(u_i,u_{M+1})\prod_{j=k,j\neq i}^Mf(u_i,u_j)
\non\\&&\qquad-
h_3(u_i,u_{M+1})\prod_{j=k,j\neq i}^Mh(u_i,u_j)\bigg)\,,
\ee
\be
&&\tau_{b_2}^M=\sum_{i=2}^M(q+q^{-1})\sum_{k=2}^ir_{k-2}Q^{-2}\bigg(d(u_i)\prod_{j=k,j\neq i}^Mf(u_i,u_j)
\non\\&&\qquad+
a(u_i)\prod_{j=k,j\neq i}^Mh(u_i,u_j)\bigg)\,,
\ee
\be
&&\tau_b^M=-\sum_{i=2}^M
(q+q^{-1})\sum_{k=2}^ir_{k-2}Q^{-2}\prod_{j=k,j\neq i}^Mh(u_i,u_j)\,.
\ee
The last terms are
\be\label{partial4}
&&\sum_{i=0}^{M-1}r_i\bar B_i^M(\left\{u,\bar u\right\})\mB(u_{M+1})
\non\\&&\qquad=
\sum_{i=0}^{M}r_i\bar B_i^{M+1}(\left\{u,\bar{\bar u}\right\})-r_M\underbrace{\bar B_M^{M+1}(\left\{u,\bar{\bar u}\right\})}_{\eq(\ref{BB1})}
\non\\&&\qquad=
\sum_{i=0}^{M}r_i\bar B_i^{M+1}(\left\{u,\bar{\bar u}\right\})-r_MB^M(\bar u)\mB_1(u)\mB_2(u_{M+1})\,,
\ee
\be\label{partial5}
&&\sum_{i=1}^{M-1}s_i\tilde B_i^M(\left\{u,\bar u\right\})\mB(u_{M+1})
\non\\&&\qquad=\sum_{i=1}^{M}s_i\tilde B_i^{M+1}(\left\{u,\bar{\bar u}\right\})-s_M\underbrace{\tilde B_M^{M+1}(\left\{u,\bar{\bar u}\right\})}_{\eq(\ref{BE})}
\non\\&&\qquad=\sum_{i=1}^{M}s_i\tilde B_i^{M+1}(\left\{u,\bar{\bar 
u}\right\})-s_MB^M(\bar u)\mE(u)\mB(u_{M+1}) \,,
\ee
\be\label{partial6}
&&\alpha^{M}(\left\{u,\bar u\right\})\underbrace{\tilde B_M^M(\left\{u,\bar u\right\})\mB(u_{M+1})}_{\eq(\ref{BE})}=
\alpha^{M}(\left\{u,\bar u\right\})B^M(\bar u)\mE(u)\mB(u_{M+1}) \,.
\ee
We observe the following identities
\be
&&\label{idaa}\gamma_a^M+\theta_a^M+\tau_a^M=0\,,\\
&&\label{iddd}\gamma_d^M+\theta_d^M+\tau_d^M=0\,,\\
&&\label{idee}\gamma_e^M+\theta_e^M+\tau_e^M=\alpha^{M+1}(\{u,\bar{\bar u}\})\,,\\
&&\label{idb2}\gamma_{b_2}^M+\theta_{b_2}^M+\tau_{b_2}^M=r_M\,,\\
&&\label{idbb}\theta_b^M+\tau_b^M+\alpha^{M}(\{u,\bar u\})=s_M\,,
\ee
which are typical in algebraic Bethe ansatz
analyses, see \textit{e.g.} equations
(A.8) and (A.9) in \cite{Avan:2015ada}.  As an example here, let us show the validity of the simplest relation (\ref{idb2}), using analytical
arguments.
We start by calculating the residues of the left-hand side of (\ref{idb2}); we note that
\be
\textrm{Res}\left(\gamma_{b_2}^M,u=u_{\textrm{pole}}\right)=-\textrm{Res}\left(\theta_{b_2}^M,u=u_{\textrm{pole}}\right)
\ee
and
\be
\textrm{Res}\left(\tau_{b_2}^M,u=u_{\textrm{pole}}\right)=0
\ee
where $u_{\textrm{pole}}=u_k, -u_k, q^{-1}u_k^{-1},-q^{-1}u_k^{-1}$ for $k=1,\dots,M$. This shows that the residue of the left-hand side of the functional relation
is zero; therefore it is holomorphic on the entire complex plane, and thus equals a constant. The constant can be determined by taking the limit:
\be
&&\lim_{u\rightarrow\infty}\gamma_{b_2}^M+\theta_{b_2}^M+\tau_{b_2}^M=r_M\,.
\ee
The other functional relations (except for 
(\ref{idee}), which is trivial since it is basically the definition 
of $\alpha^M$) can be analyzed in the same way.

Finally, using the results (\ref{partial1}), (\ref{partial2}), (\ref{partial3}), (\ref{partial4}), (\ref{partial5}) and (\ref{partial6}) in (\ref{step1}),
the functional identities (\ref{idaa}) - (\ref{idbb}),
as well as noticing that $B^M(\bar u)\mB(u)\mE(u_{M+1})=\tilde{B}_{M+1}^{M+1}(\{u,\bar{\bar u}\})$,
we obtain
\be
&&\mA(u)B^{M+1}(\bar{\bar{u}})=B^{M+1}(\bar{\bar u})\mA(u)\prod_{i=1}^{M+1}f(u,u_i)+
\sum_{i=1}^{M+1}\hat F_i^{M+1}(\left\{u,\bar{\bar{u}}\right\})+\sum_{i=2}^{M+1}\hat Z_i^{M+1}(\left\{u,\bar{\bar{u}}\right\})\non\\
&&\qquad+
\sum_{i=0}^{M}r_i\bar B_i^{M+1}(\left\{u,\bar{\bar{u}}\right\})
+\sum_{i=1}^{M}s_i\tilde B_i^{M+1}(\left\{u,\bar{\bar{u}}\right\})+
\alpha^{M+1}(\left\{u,\bar{\bar{u}}\right\})\tilde 
B_{M+1}^{M+1}(\left\{u,\bar{\bar{u}}\right\}) \,,
\ee
which ends the proof.

\subsection{Proof of (\ref{DonBstring})}

Let us now consider the relation (\ref{DonBstring}).
Its validity for $M=1$ follows directly from the commutation relations (\ref{AB}) and (\ref{DB}). Let us suppose
that (\ref{DonBstring}) is valid for arbitrary $M$,
and compute the action
\be\label{step0a}
\mD(u)B^{M+1}(\bar{\bar{u}})=\mD(u)B^M(\bar u)\mB(u_{M+1}) \,,
\ee
where $\bar{\bar{u}}=\left\{\bar u,u_{M+1}\right\}$ with $\#\bar u=M$. Using the induction hypothesis (\ref{DonBstring}) in (\ref{step0a}) we obtain
\be\label{step1a}
&&\mD(u)B^{M+1}(\bar{\bar{u}})=
B^M(\bar u)\mD(u)\mB(u_{M+1})\prod_{i=1}^Mh(u,u_i)
+\sum_{i=1}^M\hat G_i^M(\left\{u,\bar u\right\})\mB(u_{M+1})\non\\
&&\qquad-Q^{-2}a(u)\sum_{i=2}^M\hat Z_i^M(\left\{u,\bar u\right\})\mB(u_{M+1})
-	
Q^{-2}a(u)\sum_{i=0}^{M-1}r_i\bar B_i^M(\left\{u,\bar u\right\})\mB(u_{M+1})\non\\
&&\qquad
-Q^{-2}a(u)\sum_{i=1}^{M-1}s_i\tilde B_i^M(\left\{u,\bar u\right\})\mB(u_{M+1})+
\delta^{M}(\left\{u,\bar u\right\})\tilde B_M^M(\left\{u,\bar u\right\})\mB(u_{M+1})\non\\&&
\qquad-Q^{-2}\mE(u)B^M(\bar u)\mB(u_{M+1}) \,.
\ee
The next step consists of using the commutation relations
(\ref{AB}), (\ref{DB}), (\ref{BB}), (\ref{BB1}) and (\ref{BE}) in 
(\ref{step1a}). Most of the terms have already been
computed in the previous subsection; thus, we need to compute here only the first two and the last two terms in the right-hand side of (\ref{step1a}).
We have
\be\label{partial1a}
\lefteqn{B^M(\bar u)\underbrace{\mD(u)\mB(u_{M+1})}_{\eq(\ref{DB})}\prod_{i=1}^Mh(u,u_i)}
\non\\
&&=
B^{M+1}(\bar{\bar u})\mD(u)\prod_{i=1}^{M+1}h(u,u_i)
+\bar\gamma_d^M B^M(\bar u)\mB(u)\mD(u_{M+1})
+\bar\gamma_a^MB^M(\bar u)\mB(u)\mA(u_{M+1})\non\\&&
\qquad+
\bar\gamma_e^M B^M(\bar u)\mB(u)\mE(u_{M+1})
+\bar\gamma_{b_2}^M B^M(\bar u)\mB_1(u)\mB_2(u_{M+1})
+\bar\gamma_{b}^MB^M(\bar u)\mE(u)\mB(u_{M+1})\,, \non\\
\ee
where
\be
&&\bar\gamma_a^M=h_2(u,u_{M+1})\prod_{i=1}^Mh(u,u_i)\,,
\ee
\be
&&\bar\gamma_d^M=h_1(u,u_{M+1})\prod_{i=1}^Mh(u,u_i)\,,
\ee
\be
&&\bar\gamma_e^M=h_3(u,u_{M+1})\prod_{i=1}^Mh(u,u_i)\,,
\ee
\be
&&\bar\gamma_{b_2}^M=Q^{-2}a(u)\prod_{i=1}^Mh(u,u_i)\,,
\ee
\be
&&\bar\gamma_{b}^M=-Q^{-2}\prod_{i=1}^Mh(u,u_i)\,.
\ee
The next term is computed in a similar way as (\ref{partial2}). The result is given by
\be\label{partial2a}
&&\sum_{i=1}^M\hat G_i^M(\left\{u,\bar u\right\})\mB(u_{M+1})=
\sum_{i=1}^{M+1}\hat G_i^{M+1}(\{u,\bar{\bar u}\})+
\bar\theta_a^M\,B^M(\bar u)\mB(u)\mA(u_{M+1})
\non\\&&
\qquad\qquad+
\bar\theta_d^M\,B^M(\bar u)\mB(u)\mD(u_{M+1})
+
\bar\theta_e^M\,
B^M(\bar u)\mB(u)\mE(u_{M+1})
\non\\&&
\qquad\qquad+
\bar\theta_{b_2}^M\,
B^M(\bar u)\mB_1(u)\mB_2(u_{M+1})
+
\bar\theta_b^M\,
B^M(\bar u)\mE(u)\mB(u_{M+1}) \,,
\ee
where we introduced the auxiliary functions
\be
&&\bar\theta_a^M=-h_2(u,u_{M+1})\prod_{j=1}^{M}f(u_{M+1},u_j)+\sum_{i=1}^M\bigg(h_2(u,u_i)f_1(u_i,u_{M+1})\prod_{j\neq i}^Mf(u_i,u_j)\non\\&&
\qquad+
h_1(u,u_i)h_2(u_i,u_{M+1})\prod_{j\neq i}^Mh(u_i,u_j)\bigg)\,,
\ee
\be
&&\bar\theta_d^M=-
h_1(u,u_{M+1})\prod_{j=1}^{M}h(u_{M+1},u_j)+\sum_{i=1}^M\bigg(h_2(u,u_i)f_2(u_i,u_{M+1})\prod_{j\neq i}^Mf(u_i,u_j)
\non\\&&
\qquad+
h_1(u,u_i)h_1(u_i,u_{M+1})\prod_{j\neq i}^Mh(u_i,u_j)\bigg)\,,
\ee
\be
&&\bar\theta_e^M=\sum_{i=1}^M\bigg(
h_2(u,u_i)f_3(u_i,u_{M+1})\prod_{j\neq i}^Mf(u_i,u_j)+
h_1(u,u_i)h_3(u_i,u_{M+1})\prod_{j\neq i}^Mh(u_i,u_j)\bigg)\,,\non\\
\ee
\be
\bar\theta_{b_2}^M=-\sum_{i=1}^M\bigg(
h_2(u,u_i)\prod_{j\neq i}^Mf(u_i,u_j)-
Q^{-2}h_1(u,u_i)a(u_i)\prod_{j\neq i}^Mh(u_i,u_j)\bigg)\,,
\ee
\be
\bar\theta_b^M=-\sum_{i=1}^MQ^{-2}h_1(u,u_i)\prod_{j\neq i}^Mh(u_i,u_j)\,.
\ee
The last terms are easily evaluated
\be\label{partial6a}
&&\delta^{M}(\left\{u,\bar u\right\})\underbrace{\tilde B_M^M(\left\{u,\bar u\right\})\mB(u_{M+1})}_{\eq(\ref{BE})}=
\delta^{M}(\left\{u,\bar u\right\})B^M(\bar u)\mE(u)\mB(u_{M+1}) \,,
\ee
\be\label{partiala}
-Q^{-2}\mE(u)B^M(\bar u)\mB(u_{M+1})=-Q^{-2}\mE(u)B^{M+1}(\bar{\bar u})\,.
\ee
We now observe identities that are analogous to (\ref{idaa}) - (\ref{idbb}), namely,
\be
&&\label{idaa1}\bar\gamma_a^M+\bar\theta_a^M-Q^{-2}a(u)\tau_a^M=0\,,\\
&&\label{iddd1}\bar\gamma_d^M+\bar\theta_d^M-Q^{-2}a(u)\tau_d^M=0\,,\\
&&\label{idee1}\bar\gamma_e^M+\bar\theta_e^M-Q^{-2}a(u)\tau_e^M=\delta^{M+1}(\{u,\bar{\bar u}\})\,,\\
&&\label{idb21}\bar\gamma_{b_2}^M+\bar\theta_{b_2}^M-Q^{-2}a(u)\tau_{b_2}^M=-Q^{-2}a(u)r_M\,,\\
&&\label{idbb1}\bar\gamma_b^M+\bar\theta_b^M-Q^{-2}a(u)\tau_b^M+\delta^{M}(\{u,\bar u\})=-Q^{-2}a(u)s_M\,.
\ee
Finally, using the results (\ref{partial1a}), (\ref{partial2a}), (\ref{partial6a}), (\ref{partiala}),  (\ref{partial3}), (\ref{partial4}) and (\ref{partial5}) in (\ref{step1a}),
the functional identities (\ref{idaa1}) - (\ref{idbb1}), as well as noticing that $B^M(\bar u)\mB(u)\mE(u_{M+1})=\tilde{B}_{M+1}^{M+1}(\{u,\bar{\bar u}\})$,
we obtain
\be
&&\mD(u)B^{M+1}(\bar{\bar u})=B^{M+1}(\bar{\bar u})\mD(u)\prod_{i=1}^{M+1}h(u,u_i)+\sum_{i=1}^{M+1}\hat G_i^{M+1}(\left\{u,\bar{\bar u}\right\})
\non\\
&&\qquad-Q^{-2}a(u)\sum_{i=2}^{M+1}\hat Z_i^{M+1}(\left\{u,\bar{\bar u}\right\})
-	
Q^{-2}a(u)\sum_{i=0}^{M}r_i\bar B_i^{M+1}(\left\{u,\bar{\bar u}\right\})
\non\\
&&\qquad
-Q^{-2}a(u)\sum_{i=1}^{M}s_i\tilde B_i^{M+1}(\left\{u,\bar{\bar u}\right\})+\delta^{M+1}(\left\{u,\bar u\right\})\tilde B_{M+1}^{M+1}(\left\{u,\bar{\bar u}\right\})
-Q^{-2}\mE(u)B^{M+1}(\bar{\bar u})\,,\non\\
\ee
which concludes the proof.


\begin{thebibliography}{10}

\bibitem{Faddeev:1996iy}
L.~D. Faddeev, ``{How algebraic Bethe ansatz works for integrable models},'' in
  {\em Sym\'etries Quantiques (Les Houches Summer School Proceedings vol 64)},
  A.~Connes, K.~Gawedzki, and J.~Zinn-Justin, eds., pp.~149--219.
\newblock North Holland, 1998.
\newblock
\href{http://arxiv.org/abs/hep-th/9605187}{{\ttfamily arXiv:hep-th/9605187
  [hep-th]}}.
\newblock

\bibitem{Sklyanin:1988yz}
E.~K. Sklyanin, ``{Boundary Conditions for Integrable Quantum Systems},''
\href{http://dx.doi.org/10.1088/0305-4470/21/10/015}{{\em J. Phys.} {\bfseries
  A21} (1988) 2375--289}.

\bibitem{KBI}
V.~E. {Korepin}, N.~M. {Bogoliubov}, and A.~G. {Izergin}, {\em {Quantum Inverse
  Scattering Method and Correlation Functions}}.
\newblock {Cambridge University Press}, 1997.

\bibitem{Temperley:1971iq}
H.~N.~V. Temperley and E.~H. Lieb, ``{Relations between the 'percolation' and
  'colouring' problem and other graph-theoretical problems associated with
  regular planar lattices: some exact results for the 'percolation' problem},''
\href{http://dx.doi.org/10.1098/rspa.1971.0067}{{\em Proc. Roy. Soc. Lond.}
  {\bfseries A322} (1971) 251--280}.

\bibitem{Batchelor:1989uk}
M.~T. Batchelor, L.~Mezincescu, R.~I. Nepomechie, and V.~Rittenberg, ``{$q$
  Deformations of the O(3) Symmetric Spin 1 Heisenberg Chain},''
\href{http://dx.doi.org/10.1088/0305-4470/23/4/003}{{\em J. Phys.} {\bfseries
  A23} (1990) L141}.

\bibitem{Jones:1990hq}
V.~F.~R. Jones, ``{Baxterization},''
\href{http://dx.doi.org/10.1142/S0217751X91001027}{{\em Int. J. Mod. Phys.}
  {\bfseries A6} (1991) 2035--2043}.

\bibitem{Nepomechie:2016ejv}
R.~I. Nepomechie and R.~A. Pimenta, ``{Universal Bethe ansatz solution for the
  Temperley–Lieb spin chain},''
  \href{http://dx.doi.org/10.1016/j.nuclphysb.2016.04.045}{{\em Nucl. Phys.}
  {\bfseries B910} (2016) 910--928},
\href{http://arxiv.org/abs/1601.04378}{{\ttfamily arXiv:1601.04378 [math-ph]}}.

\bibitem{Parkinson1}
J.~B. Parkinson, ``{On the integrability of the S=1 quantum spin chain with
  pure biquadratic exchange},'' {\em J. Phys. C: Solid State Phys.} {\bfseries
  20} no.~36, (1987) L1029.

\bibitem{Parkinson2}
J.~B. Parkinson, ``{The S=1 quantum spin chain with pure biquadratic
  exchange},'' {\em J. Phys. C: Solid State Phys.} {\bfseries 21} (1988) 3793.

\bibitem{Barber:1989zz}
M.~N. Barber and M.~T. Batchelor, ``{Spectrum of the biquadratic spin-1
  antiferromagnetic chain},''
\href{http://dx.doi.org/10.1103/PhysRevB.40.4621}{{\em Phys. Rev.} {\bfseries
  B40} (1989) 4621--4626}.

\bibitem{Kluemper1}
A.~{Kl{\"u}mper}, ``{New results for q-state vertex models and the pure
  biquadratic spin-1 Hamiltonian},''
  \href{http://dx.doi.org/10.1209/0295-5075/9/8/013}{{\em EPL (Europhysics
  Letters)} {\bfseries 9} (1989) 815}.

\bibitem{Kluemper2}
A.~{Kl{\"u}mper}, ``{The spectra of q-state vertex models and related
  antiferromagnetic quantum spin chains},''
  \href{http://dx.doi.org/10.1088/0305-4470/23/5/023}{{\em J. Phys. A: Math.
  Gen.} {\bfseries 23} (1990) 809--823}.

\bibitem{Alcaraz:1992uq}
F.~C. Alcaraz and A.~L. Malvezzi, ``{On the critical behavior of the
  anisotropic biquadratic spin 1 chain},''
{\em J. Phys.} {\bfseries A25} (1992) 4535--4546.

\bibitem{Koberle:1993in}
R.~Koberle and A.~Lima-Santos, ``{Exact solution of the deformed biquadratic
  spin 1 chain},'' \href{http://dx.doi.org/10.1088/0305-4470/27/16/009}{{\em J.
  Phys.} {\bfseries 27} (1994) 5409},
\href{http://arxiv.org/abs/hep-th/9302140}{{\ttfamily arXiv:hep-th/9302140
  [hep-th]}}.

\bibitem{Kulish}
P.~Kulish, ``{On spin systems related to the Temperley-Lieb algebra},'' {\em J.
  Phys. A: Math. Gen.} {\bfseries 36} (2003) L489.

\bibitem{Aufgebauer:2010gg}
B.~Aufgebauer and A.~Kl{\"u}mper, ``{Quantum spin chains of Temperley-Lieb
  type: periodic boundary conditions, spectral multiplicities and finite
  temperature},''
  \href{http://dx.doi.org/10.1088/1742-5468/2010/05/P05018}{{\em J. Stat.
  Mech.} {\bfseries 1005} (2010) P05018},
\href{http://arxiv.org/abs/1003.1932}{{\ttfamily arXiv:1003.1932
  [cond-mat.stat-mech]}}.

\bibitem{Tarasov}
V.~Tarasov, ``{Algebraic Bethe ansatz for the Izergin-Korepin R matrix},''
  \href{http://dx.doi.org/10.1007/BF01028578}{{\em Theor. Math. Phys.}
  {\bfseries 76} no.~2, (1988) 793--803}.

\bibitem{Kulish:1983rd}
P.~P. Kulish and N.~{\relax Yu}. Reshetikhin, ``{Diagonalisation of GL(n)
  invariant transfer matrices and quantum N-wave system (Lee model)},''
\href{http://dx.doi.org/10.1088/0305-4470/16/16/001}{{\em J. Phys.} {\bfseries
  A16} (1983) L591--L596}.

\bibitem{Fan1997409}
H.~Fan, ``{Bethe ansatz for the Izergin-Korepin model},'' {\em Nucl. Phys.}
  {\bfseries B488} no.~1-2, (1997) 409 -- 425.

\bibitem{FOERSTER1993512}
A.~Foerster and M.~Karowski, ``{The supersymmetric t-J model with quantum group
  invariance},''
  \href{http://dx.doi.org/http://dx.doi.org/10.1016/0550-3213(93)90377-2}{{\em
  Nucl. Phys.} {\bfseries B408} no.~3, (1993) 512 -- 534}.

\bibitem{Kulish:1991np}
P.~P. Kulish and E.~K. Sklyanin, ``{The general U(q)(sl(2)) invariant XXZ
  integrable quantum spin chain},''
{\em J. Phys.} {\bfseries A24} (1991) L435--L439.

\bibitem{Gainutdinov:2015vba}
A.~M. Gainutdinov, W.~Hao, R.~I. Nepomechie, and A.~J. Sommese, ``{Counting
  solutions of the Bethe equations of the quantum group invariant open XXZ
  chain at roots of unity},''
  \href{http://dx.doi.org/10.1088/1751-8113/48/49/494003}{{\em J. Phys.}
  {\bfseries A48} no.~49, (2015) 494003},
\href{http://arxiv.org/abs/1505.02104}{{\ttfamily arXiv:1505.02104 [math-ph]}}.

\bibitem{Sla89}
N.~A. Slavnov, ``{Calculation of scalar products of wave functions and form
  factors in the framework of the algebraic Bethe ansatz},''
  \href{http://dx.doi.org/10.1007/BF01016531}{{\em Theoret. and Math. Phys.}
  {\bfseries 79} (1989) 502--508}.

\bibitem{Gaudin}
M.~Gaudin, {\em La fonction d'onde de Bethe}.
\newblock Masson, 1933.
\newblock English translation by J.-S. Caux, The Bethe Wavefunction, Cambridge
  University Press, 2014.

\bibitem{PhysRevD.23.417}
M.~Gaudin, B.~M. McCoy, and T.~T. Wu, ``{Normalization sum for the Bethe's
  hypothesis wave functions of the Heisenberg-Ising chain},''
  \href{http://dx.doi.org/10.1103/PhysRevD.23.417}{{\em Phys. Rev. D}
  {\bfseries 23} (1981) 417--419}.

\bibitem{Korepin:1982gg}
V.~E. Korepin, ``{Calculation of norms of Bethe wave functions },''
\href{http://dx.doi.org/10.1007/BF01212176}{{\em Commun. Math. Phys.}
  {\bfseries 86} (1982) 391--418}.

\bibitem{Kitanine:2007bi}
N.~Kitanine, K.~K. Kozlowski, J.~M. Maillet, G.~Niccoli, N.~A. Slavnov, and
  V.~Terras, ``{Correlation functions of the open XXZ chain I},''
  \href{http://dx.doi.org/10.1088/1742-5468/2007/10/P10009}{{\em J. Stat.
  Mech.} {\bfseries 0710} (2007) P10009},
\href{http://arxiv.org/abs/0707.1995}{{\ttfamily arXiv:0707.1995 [hep-th]}}.

\bibitem{Wang2002633}
Y.-S. Wang, ``The scalar products and the norm of {B}ethe eigenstates for the
  boundary {XXX} {H}eisenberg spin-1/2 finite chain,''
  \href{http://dx.doi.org/http://dx.doi.org/10.1016/S0550-3213(01)00610-1}{{\em
  Nucl.Phys.} {\bfseries B622} no.~3, (2002) 633 -- 649}.

\bibitem{Finch:2014nxa}
P.~E. Finch, M.~Flohr, and H.~Frahm, ``{Integrable anyon chains: from fusion
  rules to face models to effective field theories},''
  \href{http://dx.doi.org/10.1016/j.nuclphysb.2014.10.017}{{\em Nucl. Phys.}
  {\bfseries B889} (2014) 299--332},
\href{http://arxiv.org/abs/1408.1282}{{\ttfamily arXiv:1408.1282
  [cond-mat.str-el]}}.

\bibitem{Finch:2015}
P.~E. Finch, R.~Weston, and P.~Zinn-Justin, ``{Theta function solution of the
  qKZB equation for a face model},''
  \href{http://arxiv.org/abs/1509.04594}{{\ttfamily arXiv:1509.04594
  [math-ph]}}.

\bibitem{LimaSantos:2010nw}
A.~Lima-Santos, ``{On the ${\cal{U}}_{q}[sl(2)]$ Temperley-Lieb reflection
  matrices},'' \href{http://dx.doi.org/10.1088/1742-5468/2011/01/P01009}{{\em
  J. Stat. Mech.} {\bfseries 1101} (2011) P01009},
\href{http://arxiv.org/abs/1011.2891}{{\ttfamily arXiv:1011.2891 [nlin.SI]}}.

\bibitem{Avan:2010mh}
J.~Avan, P.~Kulish, and G.~Rollet, ``{Reflection $K$-matrices related to
  Temperley-Lieb $R$-matrices},''
  \href{http://dx.doi.org/10.1007/s11232-011-0130-y}{{\em Theor. Math. Phys.}
  {\bfseries 169} (2011) 1530},
\href{http://arxiv.org/abs/1012.3012}{{\ttfamily arXiv:1012.3012 [nlin.SI]}}.

\bibitem{Belliard:2014fsa}
S.~Belliard, ``{Modified algebraic Bethe ansatz for XXZ chain on the segment
  – I: Triangular cases},''
  \href{http://dx.doi.org/10.1016/j.nuclphysb.2015.01.003}{{\em Nucl. Phys.}
  {\bfseries B892} (2015) 1--20},
\href{http://arxiv.org/abs/1408.4840}{{\ttfamily arXiv:1408.4840 [math-ph]}}.

\bibitem{Belliard:2014rna}
S.~Belliard and R.~A. Pimenta, ``{Modified algebraic Bethe ansatz for XXZ chain
  on the segment – II – general cases},''
  \href{http://dx.doi.org/10.1016/j.nuclphysb.2015.03.016}{{\em Nucl. Phys.}
  {\bfseries B894} (2015) 527--552},
\href{http://arxiv.org/abs/1412.7511}{{\ttfamily arXiv:1412.7511 [math-ph]}}.

\bibitem{Avan:2015ada}
J.~Avan, S.~Belliard, N.~Grosjean, and R.~A. Pimenta, ``{Modified algebraic
  Bethe ansatz for XXZ chain on the segment – III – Proof},''
\href{http://dx.doi.org/10.1016/j.nuclphysb.2015.08.006}{{\em Nucl. Phys.}
  {\bfseries B899} (2015) 229--246}.

\bibitem{Wang2015}
Y.~Wang, W.-L. Yang, J.~Cao, and K.~Shi, {\em Off-Diagonal Bethe Ansatz for
  Exactly Solvable Models}.
\newblock Springer, 2015.

\end{thebibliography}

\providecommand{\href}[2]{#2}\begingroup\raggedright\endgroup

\end{document}